\newtheorem{fact}{Fact}
\newtheorem{theorem}[fact]{Theorem}
\newtheorem{corollary}[fact]{Corollary}
\newtheorem{definition}[fact]{Definition}
\newtheorem{proposition}[fact]{Proposition}
\newtheorem{lemma}[fact]{Lemma}
\newtheorem{example}[fact]{Example}
\newtheorem{remark}[fact]{Remark}
\newenvironment{proof}[1][Proof]{\emph{#1.} }{\hfill{$\Box$}\\}
\title{Quantifiers closed under partial polymorphisms} 
\author{Anuj Dawar\\Department of Computer Science and Technology\\ University of Cambridge, UK\\anuj.dawar@cl.cam.ac.uk
  \and Lauri Hella \\Faculty of Information Technology and
  Communication Sciences\\  Tampere University,
  Finland\\lauri.hella@tuni.fi}
\DeclareMathOperator{\N}{\ensuremath{\mathbb{N}}}
\DeclareMathOperator{\Z}{\ensuremath{\mathbb{Z}}}
\newcommand{\dom}{\mathrm{dom}}
\newcommand{\ran}{\mathrm{rng}}
\newcommand{\ar}{\mathrm{ar}}
\newcommand{\swn}{\mathrm{swn}}
\newcommand{\Odd}{\mathrm{Odd}}
\newcommand{\cI}{\mathcal{I}}
\newcommand{\cK}{\mathcal{K}}
\newcommand{\cM}{\mathcal{M}}
\newcommand{\cMaj}{\mathcal{MJ}}
\newcommand{\cN}{\mathcal{N}}
\newcommand{\cP}{\mathcal{P}}
\newcommand{\gb}{\mathrm{GB}}
\newcommand{\odd}{\mathrm{od}}
\newcommand{\even}{\mathrm{ev}}
\newcommand{\tw}{\mathrm{tw}}
\newcommand{\FO}{\mathrm{FO}}
\newcommand{\bQ}{\mathbf{Q}}
\newcommand{\CSP}{\mathsf{CSP}}
\newcommand{\bCSP}{\mathbf{CSP}}
\newcommand{\csp}{\mathrm{CSP}}
\newcommand{\PG}{\mathrm{PG}}
\newcommand{\CR}{\mathrm{CR}}
\newcommand{\PI}{\mathrm{PI}}
\newcommand{\mA}{\mathfrak{A}}
\newcommand{\mB}{\mathfrak{B}}
\newcommand{\mC}{\mathfrak{C}}
\newcommand{\mD}{\mathfrak{D}}
\newcommand{\mE}{\mathfrak{E}}
\newcommand{\mH}{\mathfrak{H}}
\newcommand{\PTIME}{\mathrm{PTIME}}
\newcommand{\Zp}[1]{\ensuremath{\Z /  #1\Z}}
\newcommand{\ra}{\rightarrow}
\begin{document}

\maketitle

\begin{abstract}
We study Lindström quantifiers that satisfy certain closure properties 
which are motivated by the study of polymorphisms in the context of 
constraint satisfaction problems (CSP).  When the algebra of 
polymorphisms of a finite structure $\mB$ satisfies certain equations, this 
gives rise to a natural closure condition on the class of structures 
that map homomorphically to $\mB$.  The collection of quantifiers that 
satisfy closure conditions arising from a fixed set of equations are 
rather more general than those arising as CSP.  For any such conditions
$\cP$, we define a pebble game that delimits the distinguishing power of the 
infinitary logic with all quantifiers that are $\cP$-closed.  We use the 
pebble game to show that the problem of deciding whether a system of
linear equations is solvable in $\Zp{2}$ is not expressible in the infinitary logic with all 
quantifiers closed under a near-unanimity condition.
\end{abstract}

\section{Introduction}\label{sec:Intro}

Generalized quantifiers, also known as Lindstr\"{o}m quantifiers, have played a significant role in the development of finite model theory.  The subject of finite model theory is the expressive power of logics in the finite, and Lindstr\"om quantifiers provide a very general and abstract method of constructing logics.  We can associate with any isomorphism-closed class of structures $\cK$, a quantifier $Q_{\cK}$ so that the extension $L(Q_{\cK})$ of a logic $L$ with the quantifier $Q_\cK$ is the \emph{minimal} extension of $L$ that can express the class $\cK$, subject to certain natural closure conditions.  For this reason, comparing the expressive power of logics with Lindstr\"om quantifiers is closely related to comparing the descriptive complexity of the underlying classes of structures.

Another reason for the significance of Lindstr\"om quantifiers is that we have powerful methods for proving inexpressibility in logics with such quantifiers.  In particular, games, based on Hella's bijection games~\cite{Hella96}, are the basis of the most common inexpressivity results that have been obtained in finite model theory. The $k,n$-bijection game was introduced by Hella to
characterize equivalence in the logic $L^{k}_{\infty\omega}(\bQ_n)$,
which is the extension of the infinitary logic with $k$ variables by means of all $n$-ary Lindstr\"{o}m quantifiers.  A quantifier $Q_\cK$ is $n$-ary if the class $\cK$ is defined over a vocabulary $\sigma$ in which all relation symbols have arity $n$ or less.    In
particular, the $k,1$-bijection game, often called the $k$-pebble
bijection game, characterizes equivalence in
$L^{k}_{\infty\omega}(\bQ_1)$ which has the same expressive power as
$C^{k}_{\infty\omega}$, the $k$-variable infinitary logic with
counting.  Hella uses the $k,n$-bijection game to show that, for each
$n$, there is an $(n+1)$-ary quantifier that is not definable in
$L^{k}_{\infty\omega}(\bQ_n)$ for any $k$.

The $k,1$-bijection game has been widely used to establish inexpressibility results for $C^{k}_{\infty\omega}$.  The
$k,n$-bijection game for $n> 1$ has received relatively less
attention.  One reason is that, while equivalence in
$C^{k}_{\infty\omega}$ is a polynomial-time decidable relation, which
is in fact a relation much studied on graphs in the form of the
Weisfeiler-Leman algorithm, in contrast the relation induced by the
$k,n$-bijection game for $n> 1$ reduces to isomorphism on graphs and
is intractable in general.  Nonetheless, there is some interest in
studying, for example, the non-trivial equivalence induced by
$L^{k}_{\infty\omega}(\bQ_2)$ on structures with a ternary relation.
Grochow and Levet~\cite{GL22} investigate this relation on finite
groups.

A second reason why the logics $L^{\omega}_{\infty\omega}(\bQ_n)$ have attracted less
interest is that in finite model theory we are often interested in
logics that are closed under vectorized first-order interpretations.  This is
especially so in descriptive complexity as the complexity classes we
are trying to characterize usually have these closure properties.
While $L^{\omega}_{\infty\omega}(\bQ_1)$ is closed under first-order
interpretations, this is not the case for
$L^{\omega}_{\infty\omega}(\bQ_n)$ for $n> 1$.  Indeed, the closure of
$L^{\omega}_{\infty\omega}(\bQ_2)$ under interpretations already includes
$\bQ_n$ for all $n$ and so can express all properties of finite
structures.  So, it seems that beyond
$L^{\omega}_{\infty\omega}(\bQ_1)$, interesting logics from the point
of view of complexity necessarily include quantifiers of all arities.

One way of getting meaningful logics that include quantifiers of
unbounded arity is to consider quantifiers restricted to stronger closure conditions than just closure under isomorphisms.  In recent work, novel game-based methods have established new inexpresibilty results for such logics, i.e.\ logics with a wide class of quantifiers of unbounded arity, but satisfying further restrictions.
An important example is the class of
linear-algebraic quantifiers, introduced in~\cite{DawarGP19} which is the closure under interpretations of binary quantifiers invariant under invertible linear maps over finite fields.  Equivalence in the resulting logic is characterized by the invertible map games introduced
in~\cite{DawarH17}.  These games are used in a highly sophisticated
way by Lichter~\cite{Lichter21} to demonstrate a polynomial-time
property that is not definable in fixed-point logic with rank
introduced in~\cite{DawarGHL09,GraedelP19}.  The
result is extended to the infinitary logic with all linear-algebraic
quantifiers in~\cite{DawarGL22}.

Another example is the recent result of Hella~\cite{Hella23} showing a
hierarchy theorem for quantifiers based on \emph{constraint satisfaction problems} (CSP), using a novel game.  Recall that
for a fixed relational structure $\mB$, $\CSP(\mB)$ denotes the class
of structures that map homomorphically to $\mB$.  Hella establishes
that, for each $n > 1$, there is a structure $\mB$ with $n+1$ elements
that is not definable in $L^{\omega}_{\infty\omega}(\bQ_1,\bCSP_n)$,
where $\bCSP_n$ denotes the collection of all quantifiers of the form
$Q_{\CSP(\mB')}$ where $\mB'$ has at most $n$ elements.  Note that
$\bCSP_n$ includes quantifiers of all arities.  

The interest in CSP quantifiers is inspired by the great progress that
has been made in classifying constraint satisfaction problems in
recent years, resulting in the dichotomy theorem of Bulatov and
Zhuk~\cite{Bulatov17,Zhuk20}.  The so-called algebraic approach to the
classification of CSP has shown that the complexity of $\CSP(\mB)$ is
completely determined by the algebra of polymorphisms of the structure $\mB$.  In particular, the complexity is completely determined by the equational theory of this algebra.  As we make explicit in Section~\ref{sec:PartialPoly} below, equations satisfied by the polymorphisms of $\mB$ naturally give rise to certain closure properties for the class of structures $\CSP(\mB)$, which we describe by \emph{partial polymorphisms}.

A central aim of the present paper is to initiate the study of quantifiers closed under partial polymorphisms.  We present a Spoiler-Duplicator pebble game, based on bijection games, which exactly characterises the expressive power of such quantifiers.  More precisely, there is such a game for any suitable family $\cP$ of partial polymorphisms.  The exact definition of the game and the proof of the characterization are given in Section~\ref{Games}.

As a case study, we consider the partial polymorphisms described by a
\emph{near-unanimity} condition.  It is known since the seminal work
of Feder and Vardi~\cite{FederV98} that if a structure $\mB$ admits a
near-unanimity polymorphism, then $\CSP(\mB)$ has \emph{bounded
  width}, i.e.\ it (or more precisely, its complement) is definable in
Datalog.  On the other hand, the problem of determining the
solvability of a system of equations over the two-element field
$\Zp{2}$ is the classic example of a tractable CSP that is not of
bounded width.  Indeed, it is not even definable in
$C^{\omega}_{\infty\omega}$~\cite{AtseriasBD09}.  We show that the
collection of quantifiers that are closed under near-unanimity partial
polymorphisms is much richer than the classes  $\CSP(\mB)$ where $\mB$
has a near-unanimity polymorphism.  The collection not only includes
quantifiers which are not CSP, but it also includes CSP quantifiers
which are not of bounded width, including intractable ones such as
hypergraph colourability.  Still, we are able to show that the problem
of solving systems of equations over $\Zp{2}$ is not definable in the
extension of $C^{\omega}_{\infty\omega}$ with \emph{all} quantifiers
closed under near-unanimity partial polymorphisms.  This sheds new
light on the inter-definability of constraint satisfaction problems.
For instance, while it follows from the arity hierarchy
of~\cite{Hella96} that the extension of $C^{\omega}_{\infty\omega}$
with a quantifier for graph 3-colourability still cannot define
solvability of systems of equations over $\Zp{2}$, our result shows
this also for the extension of $C^{\omega}_{\infty\omega}$ with all
hypergraph colourability quantifiers.

\section{Preliminaries}\label{sec:Prelim}

We assume basic familiarity with logic, and in particular the logics
commonly used in finite model theory (see~\cite{EbbinghausF99}, for
example).  We write $L^k_{\infty\omega}$ to denote the infinitary logic  (that is, the closure of first-order logic with infinitary conjunctions and disjunctions) with $k$ variables and $L^{\omega}_{\infty\omega}$ for $\bigcup_{k \in \omega} L^k_{\infty\omega}$.
We are mainly interested in the extensions of these logics with generalized quantifiers, which we introduce in more detail in Section~\ref{GQ} below.

We use Fraktur letters $\mA, \mB, \ldots$ to denote structures and the corresponding Roman letters $A, B, \ldots$ to denote their universes.
Unless otherwise mentioned, all structures are assumed to be finite.
We use function notation, e.g.\ $f: A \ra B$ to denote possibly
\emph{partial} functions.  If $f:A \ra B$ is a function and $\vec a
\in A^m$ a tuple, we write $f(\vec a)$ for the tuple in $B^m$ obtained
by applying $f$ to $\vec a$ componentwise.  If $\vec a_1,\ldots,\vec
a_n$ is a sequnce of $m$-tuples, write $(\vec a_1,\ldots,\vec
a_n)^T$ for the sequence $\vec b_1,\ldots, \vec b_m$ of $n$-tuples,
where $\vec b_i$ is the tuple of $i$th components of $\vec a_1,\ldots,\vec
a_n$.  Given a function $f: A^n \ra B$, we write $\hat f(\vec a_1,\ldots,\vec
a_n)$ to denote $f((\vec a_1,\ldots,\vec
a_n)^T) = (f(\vec b_1),\ldots,f(\vec b_m)) $.

For a pair of structures $\mA$ and $\mB$, a \emph{partial isomorphism}
from $\mA$ to $\mB$ is a partial function $f: A \ra B$ which is an
isomorphism between the substructure of $\mA$ induced by the domain of
$f$ and the substructure of $\mB$ induced by the image of $f$.  We
write $\PI(\mA,\mB)$ to denote the collection of all partial
isomorphisms from $\mA$ to $\mB$.

We write $\N$ or $\omega$ to denote the natural numbers, and $\Z$ to denote the ring of integers.  For any $n \in \N$, we write $[n]$ to denote the set $\{1,\ldots,n\}$.  When mentioned without further qualification, a graph $G = (V,E)$ is simple and undirected.  That is, it is a structure with universe $V$ and one binary relation $E$ that is irreflexive and symmetric.  The \emph{girth} of a graph $G$ is the length of the shortest cycle in $G$.

\subsection{Generalized quantifiers}\label{GQ}

Let $\sigma, \tau$ be relational vocabularies 
with $\tau = \{R_1,\ldots,R_m\}$, and $\ar(R_i)=r_i$
for each $i\in [m]$.
An interpretation $\mathcal{I}$ of $\tau$ in $\sigma$ with parameters
$\vec{z}$ is a tuple of $\sigma$-formulas $(\psi_1,\ldots,\psi_m)$
along with tuples $\vec y_1,\ldots,\vec y_m$ of variables with $|\vec
y_i|=r_i$ for $i\in [m]$, such that the free variables of $\psi_i$ are
among $\vec{y}_i \vec z$.  Such an interpretation defines a mapping
that takes a $\sigma$-structure $\mA$, along with an interpretation
$\alpha$ of the parameters $\vec{z}$ in $\mA$ to a $\tau$-structure
$\mB$ as follows.   The universe of $\mB$ is $A$, and the relations
$R_i \in \tau$ are interpreted in $\mB$ by $R_i^{\mB} = \{\vec b \in
A^{r_i} \mid (\mA, \alpha[\vec b/\vec y_i])
\models \psi_i\}$.

Let $L$ be a logic and $\cK$ a class of $\tau$-structures.  The extension $L(Q_{\cK})$ of $L$ by the \emph{generalized quantifier} for
the class $\cK$ is obtained by extending the syntax of $L$ by the
following formula formation rule:
\begin{quote}
	For $\cI = (\psi_{1},\ldots,\psi_{m})$ an interpretation of $\tau$ in $\sigma$ with parameters $\vec{z}$, $\psi(\vec{z}) = Q_{\cK}\vec{y}_1,\ldots,\vec{y}_m \cI$ is a formula over the signature $\sigma$, with free variables $\vec z$.  The semantics of the formula is given by
$(\mA,\alpha) \models \psi(\vec{z})$, if, and only
if,  $\mB := \cI(\mA,\alpha)$ is  in the class $\cK$.
\end{quote}

The extension $L(\bQ)$ of $L$ by a collection $\bQ$ of generalized quantifiers
is defined by adding the rules above to $L$ for each $Q_\cK\in\mathbf{Q}$ separately.

The \emph{type} of the quantifier $Q_\cK$ is $(r_1,\ldots,r_m)$, and the
\emph{arity} of $Q_\cK$ is $\max\{r_1,\ldots,r_m\}$.
For the sake of simplicity, we assume in the sequel that the type of $Q_\cK$ 
is \emph{uniform}, i.e., $r_i=r_j$ for all $i,j\in [m]$. 
This is no loss of generality, since any quantifier $Q_\cK$ is definably equivalent with
another quantifier $Q_{\cK'}$ of uniform type with the same arity.
Furthermore, we restrict the syntactic rule of $Q_\cK$ by requiring that $\vec y_i=\vec y_j$
for all $i,j\in [m]$. Then we can denote the formula obtained by applying the rule simply
by $\varphi=Q_\cK \vec y\,(\psi_1,\ldots,\psi_m)$. 
Note however, that this convention disallows formulas of the type $\theta=Q x,y\,(R(x,y),R(y,x))$ in which both $x$ and $y$ remain free even though $x$ is bound in $R(x,y)$ and $y$ is bound in $R(y,x)$, and hence weakens the expressive power of $\FO^k(Q_\cK)$ and $L^k_{\infty\omega}(Q_\cK)$. Fortunately the loss can be compensated by using more variables (e.g., $\theta$ is equivalent with $Q z\,(R(z,y),R(z,x))$, whence the restriction does not affect the expressive power of $\FO(Q_\cK)$ and $L^\omega_{\infty\omega}(Q_\cK)$.

Let $Q=Q_\cK$ and $Q'=Q_{\cK'}$ be generalized quantifiers. We say that $Q$ is 
\emph{definable} in $L(Q')$ if the defining class $\cK$ is definable in $L(Q')$, i.e.,
there is a sentence $\varphi$ of $L(Q')$ such that $\cK=\{\mA\mid \mA\models\varphi\}$.

We write $\bQ_n$ to denote the collection of all quantifiers of arity at most $n$.  Hella~\cite{Hella96} shows that for any $n$, there is a quantifier of arity $n+1$ that is not definable in $L^{\omega}_{\infty\omega}(\bQ_n)$.  The logic $L^{\omega}_{\infty\omega}(\bQ_1)$ is equivalent to $C^{\omega}_{\infty\omega}$, the infinitary logic with counting.
The notion of interpretation we have defined is fairly
restricted in that it does not allow for \emph{relativization} or
\emph{vectorizations} (see, e.g.~\cite[Def.~12.3.6]{EbbinghausF99}.  The relativizations and vectorizations of a quantifer $Q$ can always be seen as a \emph{collection} of simple quantifiers of unbounded arity.

\subsection{CSP and polymorphisms}
Given relational structures $\mA$ and $\mB$ over the same vocabulary $\tau$, a \emph{homomorphism} $h: \mA \ra \mB$ is a function that takes elements of $A$ to elements of $B$ and such that for every $R \in \tau$ of arity $r$ and any $\vec a \in A^r$, $\vec a \in R^{\mA}$ implies $h(\vec a) \in R^{\mB}$.  For a fixed structure $\mB$, we write $\CSP(\mB)$ to denote the collection of structures $\mA$ for which there is some homomorphism $h: \mA \ra \mB$.  By the celebrated theorem of Bulatov and Zhuk, every class  $\CSP(\mB)$ is either decidable in polynomial time or NP-complete.

Given  a $\tau$-structure $\mB$ and $m \in \N$, we define a $\tau$-structure $\mB^m$. Its universe is $B^m$ and if  $R$ in $\tau$ is a relation of arity $r$, and $\vec{a}_i = (a^1_i,\ldots,a^m_i)$ is an $m$-tuple of elements of $B$, for each $i \in [r]$, then $(\vec{a}_1,\ldots,\vec{a}_r) \in R^{\mB^m}$ if, and only if, for each $j \in [m]$, $(a^j_1,\ldots,a^j_r) \in R^{\mB}$.  Then, a \emph{polymorphism} of $\mB$ is a homomorphism $p: \mB^m \ra \mB$ for some $m$.  The collection of polymorphisms of $\mB$ forms an algebraic \emph{clone} with universe $B$.  It is known that the equational theory of this algebra completely determines the computational complexity of $\CSP(\mB)$ (see~\cite{BartoKW17} for an expository account).

A function $m: B^3 \ra B$ is a \emph{majority} function if it satisfies the equations $m(a,a,b) = m(a,b,a) = m(b,a,a) = a$ for all $a,b \in B$.   More generally, for $\ell \geq 3$, a function $n: B^{\ell} \ra B$ is a \emph{near-unanimity} function of arity $\ell$ if for any $\ell$-tuple $\vec a$, we have $n(\vec a) = a$ whenever at least $\ell -1$ components of $\vec a$ are $a$.  In particular, a near-unanimity function of arity $3$ is a majority function.  A function $M: B^3 \ra B$ is a \emph{Maltsev} function if it satisfies the identities $M(a,b,b) = M(b,b,a) = a$ for all $a,b \in B$.

For any structure $\mB$ which has a near-unanimity polymorphism, the
class $\CSP(\mB)$ is decidable in polynomial time, and definable in
$L^{\omega}_{\infty\omega}$.  If $\mB$ admits a Maltsev polymorphism,
then $\CSP(\mB)$ is also decidable in polynomial time, but may not be
definable in $L^{\omega}_{\infty\omega}$ or
$L^{\omega}_{\infty\omega}(\bQ_1)$, its extension with all unary
quantifiers.  The classic example of a CSP with a Maltsev polymorphism
that is not definable in $L^{\omega}_{\infty\omega}(\bQ_1)$ is solvability of systems of equations over $\Zp{2}$ with $\ell$ variables per equation.  We can treat this as the class of structures $\CSP(\mC_{\ell})$ where $\mC_{\ell}$ is the structure with universe $\{0,1\}$ and two $\ell$-ary relations $R_0 = \{(b_1,\ldots,b_{\ell}) \mid \sum_i b_i \equiv 0 \pmod 2\}$ and $R_1 = \{(b_1,\ldots,b_{\ell}) \mid \sum_i b_i \equiv 1 \pmod 2\}$.

If $\cK = \CSP(\mB)$ for some fixed structure $\mB$, we call $Q_{\cK}$
a \emph{CSP quantifier}.  Write $\mathbf{CSP}_n$ for the collection of
all CSP quantifiers $Q_{\cK}$ where $\cK = \CSP(\mB)$ for a structure
with at most $n$ elements.   Note that $\mathbf{CSP}_n$ contains
quantifiers of all arities.   Hella~\cite{Hella23} defines a pebble
game that characterizes equivalence of structures in the logic
$L^{\omega}_{\infty\omega}(\bQ_1,\mathbf{CSP}_n)$ and shows that there
is a structure $\mB$ on $n+1$ elements such that $\CSP(\mB)$ is not definable in this logic.

\section{Partial polymorphisms}\label{sec:PartialPoly}

Let $\tau$ be a relational vocabulary, and let $\mC$ be a
$\tau$-structure with a polymorphism $p\colon \mC^n\to\mC$.  This
gives rise to a closure condition on the class $\CSP(\mC)$.
In particular, suppose $\mB \in \CSP(\mC)$ by a
homomorphism $h: \mB \ra \mC$.  We can, in a sense, ``close''  $\mB$
under the polymorphism $p$ by including in each relation $R^{\mB}$ ($R
\in \tau$) any
tuple $\vec a$ for which $h(\vec a) = p(h(\vec a_1,\ldots,\vec a_n))$
for some $\vec a_1,\ldots,\vec a_n \in R_i^{\mB}$.  The resulting
structure $\mB'$ is still in $\CSP(\mC)$ as is any structure $\mA$
with the same universe as $\mB$ and for which $R^{\mA} \subseteq
R^{\mB'}$ for all $R \in \tau$.

Our aim is to generalize this type of closure properties from $\csp$ quantifiers to a 
larger class of generalized quantifiers.  To formally define this, it
is useful to introduce some notation.  For reasons that will become
clear, we use \emph{partial} functions $p$.

\begin{definition}
Let $A\not=\emptyset$ be a set, and let $p$ be a  be a partial function $A^n\to A$. 

(a) If $R\subseteq A^r$, then $p(R):= \{\hat p(\vec a_1,\ldots,\vec a_n)\mid \vec a_1,\ldots,\vec a_n\in R\}$.

(b) If $\mA=(A,R_1^\mA,\ldots,R_m^\mA)$, then
we denote the structure $(A,p(R_1^\mA),\ldots,p(R_m^\mA))$ by $p(\mA)$.
\end{definition}

We say that $p$ is a \emph{partial polymorphism} of a $\tau$-structure $\mA$ with domain $A$ if for every $R\in\tau$, the relation $R^\mA$ is closed with respect to $p$, i.e.,
$p(R^\mA)\subseteq R^\mA$.

The reason for considering partial functions is that we are usually
interested in polymorphisms that satisfy certain equations.  The
equations specify the polymorphism partially, but not totally.  We can
uniformly specify closure properties on our class of structures for
all polymorphisms satisfying the equations by only requiring closure
for the common partial function.  This is illustrated in the examples below.

By a \emph{family of partial functions} we mean a class $\cP$ that contains a partial function
$p_A\colon A^n\to A$ for every finite set $A$, where $n$ is a fixed positive integer. We give next
some important examples of families of partial functions that arise naturally from well-known
classes of polymorphisms.

\begin{example}\label{partial-polym}
(a) The \emph{Maltsev family} $\cM$ consists of the partial functions $M_A\colon A^3\to A$
such that $M_A(a,b,b)=M_A(b,b,a)=a$ for all $a,b\in A$, and $M_A(a,b,c)$ is undefined unless $a=b$ or $b=c$. If $\mA$ has a Maltsev polymorphism $p\colon A^3\to A$, then clearly $M_A$ is a restriction of $p$, whence it is a partial polymorphism of $\mA$.

(b) The family $\cMaj$ of ternary \emph{partial majority functions}
consists of the  partial functions $m_A\colon A^3\to A$ such that
$m_A(a,a,b)=m_A(a,b,a)=m_A(b,a,a)=a$ for all $a,b\in A$, and
$m_A(a,b,c)$ is undefined if $a,b$ and $c$ are all distinct. If $\mA$ has a majority polymorphism, then $m_A$ is a restriction of it, whence it is a partial polymorphism of $\mA$.

(c) More generally, for each $\ell\ge 3$ we define the family $\cN_\ell$ of \emph{$\ell$-ary partial near-unanimity functions} $n^\ell_A\colon A^\ell\to A$ as follows:
\begin{itemize}
\item $n^\ell_A(a_1,\ldots,a_\ell)=a$ if and only if $|\{i\in [n]\mid a_i=a\}|\ge \ell-1$.
\end{itemize} 
In particular, $\cMaj=\cN_3$.
\end{example}

We next give a formal definition for the closure property of generalized quantifiers 
that arises from a family of partial functions. 
In the definition we use the notation $\mA\le\mB$ if $\mA$ and $\mB$ are 
$\tau$-structures such that $A=B$ and $R^\mA\subseteq R^\mB$ for each $R\in\tau$.
Furthermore, we define the union $\mA\cup\mB$ of $\mA$ and $\mB$ to be the $\tau$-structure 
$\mC$ such that $C=A\cup B$ and $R^\mC=R^\mA\cup R^\mB$ for each $R\in\tau$.

\begin{definition}\label{P-closed}
Let $\cP$ be a family of $n$-ary partial functions, and let $Q_\cK$ be a generalized quantifier of vocabulary $\tau$. We say that $Q_\cK$ is $\cP$-closed if the following holds for all $\tau$-tructures $\mA$ and $\mB$ with $A=B$:
\begin{itemize}
\item  if $\mB\in\cK$ and $\mA\le p_A(\mB)\cup\mB$, then $\mA\in\cK$.
\end{itemize}
We denote the class of all $\cP$-closed quantifiers by $\bQ_\cP$.
\end{definition}

Note that the condition $\mA\le p_A(\mB)\cup\mB$ holds if and only if for every $R\in\tau$ and every $\vec a\in R^\mA\setminus R^\mB$ there are tuples $\vec a_1,\ldots,\vec a_n\in R^\mB$ such that 
$\vec a=\widehat{p_A}(\vec a_1,\ldots,\vec a_n)$.

The quantifier $Q_\cK$ is \emph{downwards monotone}, if $\mA\le\mB$ and $\mB\in\cK$ implies $\mA\in\cK$. It follows directly from Definition~\ref{P-closed} that all $\cP$-closed quantifiers are downwards monotone.

\begin{proposition}
If  $Q_\cK\in\bQ_\cP$ for some family $\cP$, then $Q_\cK$ is 
downwards monotone.
\end{proposition}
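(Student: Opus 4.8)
The plan is to derive this directly from Definition~\ref{P-closed} by exhibiting, for any pair of $\tau$-structures witnessing downward monotonicity, a suitable instance of the $\cP$-closure condition. So suppose $Q_\cK \in \bQ_\cP$ for some family $\cP$ of $n$-ary partial functions, and let $\mA, \mB$ be $\tau$-structures with $\mA \le \mB$ and $\mB \in \cK$. We must show $\mA \in \cK$.

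The key observation is that $\mA \le \mB$ already implies $\mA \le p_A(\mB) \cup \mB$, since $R^\mB \subseteq R^\mB \cup p_A(R^\mB) = R^{p_A(\mB) \cup \mB}$ for every $R \in \tau$ (here we also use that $A = B$, which holds because $\mA \le \mB$ by definition requires equal universes). Hence the hypothesis of the displayed condition in Definition~\ref{P-closed} is met with this very pair $\mA, \mB$, and $\cP$-closedness of $Q_\cK$ yields $\mA \in \cK$ immediately. That is all that is needed.

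There is no real obstacle here; the only point requiring the tiniest bit of care is the observation that the $\cP$-closure condition is genuinely weaker than — and hence implied by — downward monotonicity when restricted to structures sharing a universe, because $p_A(\mB) \cup \mB$ always sits above $\mB$ in the $\le$ order regardless of what the partial function $p_A$ does (it can only add tuples, never remove them, since we take a union with $\mB$). Thus every structure below $\mB$ is in particular below $p_A(\mB) \cup \mB$. One should perhaps remark that this works uniformly for \emph{any} family $\cP$, so the conclusion holds unconditionally for every member of $\bigcup_\cP \bQ_\cP$. This is exactly the content of the sentence preceding the proposition in the excerpt, so the proof is essentially just making that sentence precise.
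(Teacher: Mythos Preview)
Your proof is correct and matches the paper's approach exactly: the paper gives no explicit proof, merely remarking that the claim ``follows directly from Definition~\ref{P-closed},'' and your argument is precisely the one-line unpacking of that remark.

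One small caution: in your closing commentary you write that ``the $\cP$-closure condition is genuinely weaker than --- and hence implied by --- downward monotonicity.'' This is stated backwards. Because the hypothesis $\mA\le p_A(\mB)\cup\mB$ is \emph{weaker} than $\mA\le\mB$, the resulting closure requirement on $Q_\cK$ is \emph{stronger} than downward monotonicity, and therefore $\cP$-closure \emph{implies} (rather than is implied by) downward monotonicity --- which is exactly what you correctly proved in the preceding paragraph. The mathematics is fine; just fix the wording so the commentary does not contradict the proposition you have just established.
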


It is easy to see that, for any family $\cP$, the first-order quantifiers can be defined from
a $\cP$-closed quantifier using only negation.

\begin{proposition}\label{exists-P-closed}
Let $\cK_0$ be the class of all $\{P\}$-structures $\mA$ such that $P^\mA=\emptyset$. Then $Q_{\cK_0}\in\bQ_\cP$ for any family $\cP$ of partial functions.
\end{proposition}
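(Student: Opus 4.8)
The statement to prove is Proposition~\ref{exists-P-closed}: the class $\cK_0$ of $\{P\}$-structures with $P^\mA = \emptyset$ gives a quantifier $Q_{\cK_0}$ that is $\cP$-closed for every family $\cP$.

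The plan is to verify the closure condition of Definition~\ref{P-closed} directly, which in this case is almost trivial. Suppose $\mB \in \cK_0$ and $\mA \le p_A(\mB) \cup \mB$; I must show $\mA \in \cK_0$. Since $\mB \in \cK_0$, we have $P^\mB = \emptyset$. The key observation is that applying any partial function to the empty relation yields the empty relation: $p_A(\emptyset) = \{\widehat{p_A}(\vec a_1,\ldots,\vec a_n) \mid \vec a_1,\ldots,\vec a_n \in \emptyset\} = \emptyset$, because there are no tuples to choose from. Hence $p_A(\mB)$ has $P^{p_A(\mB)} = p_A(P^\mB) = p_A(\emptyset) = \emptyset$, so $p_A(\mB) \cup \mB$ also interprets $P$ as $\emptyset \cup \emptyset = \emptyset$. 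Then $\mA \le p_A(\mB) \cup \mB$ forces $P^\mA \subseteq \emptyset$, i.e. $P^\mA = \emptyset$, so $\mA \in \cK_0$.

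An even quicker route, which I would mention as the conceptual reason, is via the preceding remark: every $\cP$-closed quantifier is downwards monotone, but here it suffices to note that the condition $\mA \le p_A(\mB)\cup\mB$ implies $\mA \le \mB$ whenever $P^\mB = \emptyset$ (since then $p_A(\mB)\cup\mB = \mB$ on the relation $P$), and $\cK_0$ is clearly downwards closed. Either phrasing works; I would write the direct computation since it is shortest and self-contained.

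There is essentially no obstacle here — the only thing to be careful about is the edge case in the definition of $p(R)$ when $R = \emptyset$, namely that an empty family of arguments produces an empty set of values, which is immediate from the displayed set-builder definition in Definition~\ref{P-closed}'s predecessor. I would state this explicitly as a one-line lemma or inline remark ($p_A(\emptyset) = \emptyset$ for any partial $p_A$) since it is the crux, trivial as it is, and then the proposition follows in two sentences.
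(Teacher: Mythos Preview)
Your proof is correct and follows essentially the same argument as the paper's: observe that $P^\mB=\emptyset$ forces $p_A(P^\mB)=\emptyset$, so $P^\mA\subseteq\emptyset$ and $\mA\in\cK_0$. Your version is slightly more detailed in spelling out why $p_A(\emptyset)=\emptyset$, but the approach is identical.
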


\begin{proof}
If $\mB\in\cK_0$, then $P^\mB=\emptyset$, whence $p_B(\mB)=\emptyset$. Thus, if $\mA\le p_B(\mB)\cup\mB$, then $P^\mA=\emptyset$, and hence $\mA\in\cK_0$.
\end{proof}

Note that in the case $\ar(P)=1$, the quantifier $Q_{\cK_0}$ of the proposition above is the negation 
of the existential quantifier: $\mA\models Q_{\cK_0} x\,\varphi\iff\mA\models\lnot\exists x\,\varphi$.

Up to now we have not imposed any restrictions on the family $\cP$. 
It is natural to require that the partial functions in $\cP$ are uniformly defined, or at least
that $(A,p_A)$ and $(B,p_B)$ are isomorphic if $|A|=|B|$. Such requirements are captured 
by the notions defined below.

\begin{definition}\label{p-f-families}
Let $\cP$ be a family of $n$-ary partial functions.

(a) $\cP$ is \emph{invariant} if it respects bijections: 
if $f\colon A\to B$ is a bijection and $a_1,\ldots,a_n\in A$, then $p_B(f(a_1),\ldots,f(a_n))\simeq f(p_A(a_1,\ldots,a_n))$. Here the symbol $\simeq$ says that either both sides are defined and
have the same value, or both sides are undefined.

(b) $\cP$ is \emph{strongly invariant} if it respects injections: 
if $f\colon A\to B$ is an injection and $a_1,\ldots,a_n\in A$, then $p_B(f(a_1),\ldots,f(a_n))\simeq f(p_A(a_1,\ldots,a_n))$.

(c) $\cP$ is \emph{projective}, if it strongly invariant and it is preserved by all functions:
if $f\colon A\to B$ is a function and $a_1,\ldots,a_n\in A$ are such that $p_A(a_1,\ldots,a_n)$ 
is defined, then $p_B(f(a_1),\ldots,f(a_n))= f(p_A(a_1,\ldots,a_n))$.
\end{definition}

It is easy to verify that $\cP$ is invariant if, and only if, it is
determined by equality types on each cardinality: there are quantifier
free formulas in the language of equality
$\theta^m_\cP(\vec x,y)$ 
such that if $|A|=m$, then $p_A(\vec a)=b\iff A\models\theta^m_\cP[\vec a/\vec x,b/y]$ 
holds for all $\vec a\in A^n$ and $b\in A$. 
Similarly, $\cP$ is strongly invariant if, and only if, the same holds with a single formula
$\theta_\cP=\theta^m_\cP$ for all $m\in\omega$. 

Note that if the family $\cP$ is strongly invariant, then for every finite set $A$, $p_A$ is 
a \emph{partial choice function}, i.e., $p_A(a_1,\ldots,a_n)\in\{a_1,\ldots,a_n\}$. Indeed,
if $b:=p_A(a_1,\ldots,a_n)\not\in\{a_1,\ldots,a_n\}$ and $B=A\cup\{c\}$, where $c\notin A$,
then using the identity function $f=\mathrm{id}_A$ of $A$ in the condition $p_B(f(a_1),\ldots,f(a_n))= f(p_A(a_1,\ldots,a_n))$,
we get 
$p_B(a_1,\ldots,a_n)= b$. On the other hand, using the injection $f'\colon A\to B$ that
agrees with $\mathrm{id}_A$ on $A\setminus\{b\}$ but maps $b$ to $c$, we get
the contradiction $p_B(a_1,\ldots,a_n)= c\not=b$. 

\begin{remark}
An invariant family may contain functions $p_A$ that are not partial choice functions: for example the family consisting of all functions $p_A\colon A^n\to A$ such that $p_A(a_1,\ldots,a_n)=a_{n+1}\iff A\setminus\{a_1,\ldots,a_n\}=\{a_{n+1}\}$ is invariant. However, if $|A|>n+1$, then $p_A$ is necessarily a partial choice function.
\end{remark}

\begin{lemma}\label{unary-P-closed}
Let $\cP$ be a family of $n$-ary partial choice functions.
Then $Q_\cK\in\bQ_\cP$ for any unary downwards monotone quantifier $Q_\cK$.
In particular this holds if $\cP$ is strongly invariant.
\end{lemma}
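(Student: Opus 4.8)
The plan is to show that a unary downwards monotone quantifier $Q_\cK$ satisfies the closure condition of Definition~\ref{P-closed} essentially for free, because on unary relations the ``polymorphic closure'' $p_A(P^\mB)$ can never produce anything that $P^\mB$ does not already contain, once $\cP$ consists of partial choice functions. More precisely, the vocabulary $\tau$ of a unary quantifier consists of unary predicates, so fix such a $\tau$ and structures $\mA, \mB$ with $A = B$, $\mB \in \cK$, and $\mA \le p_A(\mB) \cup \mB$.

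First I would unwind the condition $\mA \le p_A(\mB) \cup \mB$ using the reformulation given right after Definition~\ref{P-closed}: for every $P \in \tau$ and every $a \in P^\mA \setminus P^\mB$ there are $a_1,\ldots,a_n \in P^\mB$ with $a = \widehat{p_A}(a_1,\ldots,a_n)$. Since $P$ is unary here, the tuples are just single elements, so $\widehat{p_A}(a_1,\ldots,a_n) = p_A(a_1,\ldots,a_n)$. Now the key step: because $p_A$ is a partial choice function, $p_A(a_1,\ldots,a_n) \in \{a_1,\ldots,a_n\}$ whenever it is defined, so $a = a_i$ for some $i$, and $a_i \in P^\mB$. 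Hence $a \in P^\mB$, contradicting $a \in P^\mA \setminus P^\mB$. Therefore $P^\mA \setminus P^\mB = \emptyset$ for every $P$, i.e.\ $\mA \le \mB$. Then downward monotonicity of $Q_\cK$ gives $\mA \in \cK$, which is exactly what is needed for $Q_\cK \in \bQ_\cP$.

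For the ``in particular'' clause I would simply invoke the observation made earlier in the text (in the paragraph following the Remark) that every strongly invariant family consists of partial choice functions; this reduces the second sentence of the lemma to the first.

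There is no real obstacle here; the only point requiring a moment's care is making sure that the hat notation $\widehat{p_A}$ collapses correctly in the unary case, i.e.\ that applying $p_A$ componentwise to $n$ one-tuples just means applying $p_A$ to the corresponding $n$ elements — which is immediate from the definition of $\widehat{p}$ given in the preliminaries. So the proof is short: unwind the hypothesis, use the choice-function property to land inside $P^\mB$, conclude $\mA \le \mB$, and apply downward monotonicity.
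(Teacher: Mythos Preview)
Your proposal is correct and matches the paper's proof essentially line for line: assume $\mB\in\cK$ and $\mA\le p_A(\mB)\cup\mB$, use the unary case to reduce $\widehat{p_A}$ to $p_A$, invoke the partial choice property to get $a\in\{a_1,\ldots,a_n\}\subseteq R^\mB$, conclude $\mA\le\mB$, and apply downward monotonicity. The ``in particular'' clause is handled exactly as you say, by the observation preceding the lemma that strongly invariant families consist of partial choice functions.
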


\begin{proof}
Let $\tau$ be the vocabulary of $\cK$, and assume that $\mB\in\cK$ and $\mA\le p_A(\mB)\cup\mB$. Then for all $R\in\tau$ and $a\in R^\mA\setminus R^\mB$ there are $a_1,\ldots,a_n\in A$ such that $p_A(a_1,\ldots,a_n)=a$ and $a_i\in R^\mB$ for each $i\in [n]$. Since $p_A$ is a choice function, we have $a\in\{a_1,\ldots,a_n\}$, and hence $a\in R^\mB$. Thus we see that $\mA\le\mB$, and consequently $\mA\in\cK$, since $Q_\cK$ is downwards monotone.
\end{proof}


It is easy to see that the families $\cM$ and $\cN_\ell$,
$\ell\ge 3$, introduced in Example~\ref{partial-polym}, are strongly
invariant.  Indeed, the defining formulas $\theta_{\cM}$ and
$\theta_{\cN_{\ell}}$ are easily obtained from the identities that
define these conditions.
Thus, all unary downwards monotone quantifiers are
$\cM$-closed and $\cN_\ell$-closed. For the families $\cN_\ell$ we can prove a much
stronger result:

\begin{lemma}\label{r-N-ell-closed}
Let $\ell\ge 3$, and let $Q_\cK$ be a downwards monotone quantifier of arity $r<\ell$. Then
$Q_\cK\in\bQ_{\cN_\ell}$.
\end{lemma}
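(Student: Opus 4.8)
The plan is to show that for a downwards monotone quantifier $Q_\cK$ of arity $r<\ell$, the hypothesis $\mA\le p_A(\mB)\cup\mB$ already forces $\mA\le\mB$, so that $\cP$-closure follows at once from downwards monotonicity. So suppose $\mB\in\cK$ and $\mA\le p_A(\mB)\cup\mB$, where $p_A = n^\ell_A$ is the $\ell$-ary partial near-unanimity function on $A$. Fix $R\in\tau$ and $\vec a\in R^\mA\setminus R^\mB$; since $\ar(R)\le r<\ell$, the tuple $\vec a$ has length $r$. By the remark following Definition~\ref{P-closed}, there are $\vec a_1,\ldots,\vec a_\ell\in R^\mB$ with $\vec a=\widehat{n^\ell_A}(\vec a_1,\ldots,\vec a_\ell)$. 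Writing $\vec a=(c_1,\ldots,c_r)$, this means that for each coordinate $j\in[r]$ the value $c_j = n^\ell_A$ applied to the $j$th components of $\vec a_1,\ldots,\vec a_\ell$; since $n^\ell_A$ is defined there, at least $\ell-1$ of those $\ell$ components equal $c_j$.

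The key combinatorial point is a pigeonhole argument across coordinates. For each $j\in[r]$, let $I_j\subseteq[\ell]$ be the set of indices $i$ such that the $j$th component of $\vec a_i$ is \emph{not} equal to $c_j$; then $|I_j|\le 1$. Hence $|\bigcup_{j\in[r]} I_j|\le r<\ell$, so there exists an index $i^*\in[\ell]\setminus\bigcup_{j} I_j$. For this $i^*$, the $j$th component of $\vec a_{i^*}$ equals $c_j$ for every $j\in[r]$, i.e.\ $\vec a_{i^*}=\vec a$. But $\vec a_{i^*}\in R^\mB$, contradicting $\vec a\in R^\mA\setminus R^\mB$. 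Therefore $R^\mA\subseteq R^\mB$ for every $R\in\tau$, so $\mA\le\mB$, and by downwards monotonicity $\mA\in\cK$. This shows $Q_\cK\in\bQ_{\cN_\ell}$.

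There is no real obstacle here; the only thing to be careful about is bookkeeping between the "rows" $\vec a_1,\ldots,\vec a_\ell$ (which are $r$-tuples) and the "columns" fed into $n^\ell_A$ (which are $\ell$-tuples), i.e.\ making sure the transpose notation $\widehat{p_A}$ is unwound correctly, and checking that the counting hypothesis on $n^\ell_A$ gives $|I_j|\le 1$ rather than $|I_j|=0$. The essential content is simply that $r$ exceptional coordinates, each costing at most one "bad" row, cannot exhaust $\ell>r$ rows, so some row coincides with $\vec a$.
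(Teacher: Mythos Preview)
Your proof is correct and follows essentially the same approach as the paper's: both arguments show that $\mA\le n^\ell_A(\mB)\cup\mB$ already implies $\mA\le\mB$ via the pigeonhole observation that each of the $r<\ell$ coordinates can spoil at most one of the $\ell$ tuples $\vec a_i$, leaving some $\vec a_{i^*}=\vec a\in R^\mB$. Your write-up is a bit more explicit about the bookkeeping (introducing the sets $I_j$), but the content is identical.
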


\begin{proof}
Let $\tau$ be the vocabulary of $\cK$, and assume that $\mB\in\cK$ and 
$\mA\le n^\ell_A(\mB)\cup\mB$. 
Then for all $R\in\tau$ and $\vec a=(a_1,\ldots,a_r)\in R^\mA\setminus R^\mB$ there are 
$\vec a_i=(a^1_i,\ldots,a^r_i)\in R^\mB$, $i\in [\ell]$, such that 
$\widehat{n^\ell_A}(\vec a_1,\ldots,\vec a_\ell)=\vec a$.
Thus, for each $j\in [r]$ there is at most one $i\in [\ell]$ such that $a^j_i\not= a_j$,
and hence there is at least one $i\in [\ell]$ such that $\vec a=\vec a_i$. This shows that
$\mA\le\mB$, and since $Q_\cK$ is downwards monotone, we conclude that $\mA\in\cK$.
\end{proof}

Using a technique originally due to Imhof  for (upwards) monotone quantifiers (see \cite{HellaI98}), 
we can show that any quantifier $Q_\cK$ is definable by a
downwards monotone quantifier of the same arity. Indeed, if the vocabulary of $\cK$
is $\tau=\{R_1,\ldots,R_m\}$, where $\ar(R_i)=r$ for all $i\in [m]$, we let 
$\tau':=\{S_1,\ldots,S_m\}$ be a disjoint copy of $\tau$, and 
$\tau^*:=\tau\cup\tau'$. Furthermore, we let $\cK^*$ be the class of all $\tau^*$-structures
$\mA$ such that $R^\mA_i\cap S^\mA_i=\emptyset$ for all $i\in [m]$, and 
 $(A,R^\mA_1,\ldots,R^\mA_m)\in\cK$ or $R^\mA_i\cup S^\mA_i\not=A^r$
for some $i\in [m]$. Then $Q_{\cK^*}$ is downwards monotone, and clearly
$Q_\cK\vec x\,(\psi_1,\ldots,\psi_m)$ is equivalent with
$Q_{\cK^*}\vec x\,(\psi_1,\ldots,\psi_m,\lnot\psi_1,\ldots,\lnot\psi_m)$.

Using this observation, we get the following corollary to Lemmas~\ref{unary-P-closed} 
and \ref{r-N-ell-closed}.

\begin{corollary}\label{less-than-ell-ary}
(a) Let $\cP$ be as in Lemma~\ref{unary-P-closed}. Then 
$L^k_{\infty\omega}(\bQ_{\cP}\cup\bQ_1)\le L^k_{\infty\omega}(\bQ_{\cP})$.

(b) $L^k_{\infty\omega}(\bQ_{\cN_\ell}\cup\bQ_{\ell-1})\le L^k_{\infty\omega}(\bQ_{\cN_\ell})$.
\end{corollary}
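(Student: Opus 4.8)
The plan is to prove both parts by a single structural induction on formulas, where the only substantive step is an application of a quantifier drawn from the ``extra'' collection ($\bQ_1$ in~(a), $\bQ_{\ell-1}$ in~(b)), and where the work is done by the Imhof-style reduction to downwards monotone quantifiers described just before the corollary, combined with Lemma~\ref{unary-P-closed} (for~(a)) and Lemma~\ref{r-N-ell-closed} (for~(b)). The reverse inclusions are trivial, since $\bQ_{\cP}\subseteq\bQ_{\cP}\cup\bQ_1$ and $\bQ_{\cN_\ell}\subseteq\bQ_{\cN_\ell}\cup\bQ_{\ell-1}$, so this in fact yields equalities.

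For~(a), I would show by induction on $\varphi\in L^k_{\infty\omega}(\bQ_{\cP}\cup\bQ_1)$ that $\varphi$ has an equivalent $\varphi'\in L^k_{\infty\omega}(\bQ_{\cP})$ with the same free variables. Atomic formulas and Boolean combinations (including infinitary ones) are immediate, and an application $Q_\cK\vec x\,(\psi_1,\dots,\psi_m)$ with $Q_\cK\in\bQ_{\cP}$ is handled by replacing each $\psi_i$ with its inductively obtained $\psi_i'$. In the one remaining case $Q_\cK\in\bQ_1$ is unary: after replacing the $\psi_i$ by equivalents $\psi_i'\in L^k_{\infty\omega}(\bQ_{\cP})$, the reduction gives $Q_\cK\vec x\,(\psi_1',\dots,\psi_m')\equiv Q_{\cK^*}\vec x\,(\psi_1',\dots,\psi_m',\lnot\psi_1',\dots,\lnot\psi_m')$ with $Q_{\cK^*}$ downwards monotone and of the same (unary) arity. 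Since $\cP$ is a family of $n$-ary partial choice functions, Lemma~\ref{unary-P-closed} yields $Q_{\cK^*}\in\bQ_{\cP}$, so the displayed formula lies in $L^k_{\infty\omega}(\bQ_{\cP})$. The reduction reuses $\vec x$ and only adds negations of the $\psi_i'$, hence introduces no new variables and preserves the bound $k$ — this is the point worth checking with a little care.

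Part~(b) is word-for-word the same, except that in the quantifier step $Q_\cK$ now ranges over $\bQ_{\ell-1}$, and so has some arity $r\le\ell-1$; the reduction produces a downwards monotone $Q_{\cK^*}$ of the same arity $r$, and since $r<\ell$, Lemma~\ref{r-N-ell-closed} gives $Q_{\cK^*}\in\bQ_{\cN_\ell}$, putting the rewritten formula in $L^k_{\infty\omega}(\bQ_{\cN_\ell})$.

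There is essentially no obstacle here: the two substantive inputs — that the Imhof construction definably converts an arbitrary quantifier into a downwards monotone one of the same arity without spending extra variables, and that downwards monotone quantifiers of the relevant arity are automatically $\cP$-closed — are already in hand. The only care needed is bookkeeping: verifying that the variable bound is not exceeded, and noting that the strict inequality $r<\ell$ required by Lemma~\ref{r-N-ell-closed} is exactly what $r\le\ell-1$ delivers.
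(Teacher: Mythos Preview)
Your proposal is correct and is exactly the argument the paper has in mind: the corollary is stated immediately after the Imhof-style reduction to downwards monotone quantifiers of the same arity, and the paper simply says it follows from that observation together with Lemmas~\ref{unary-P-closed} and~\ref{r-N-ell-closed}. Your structural induction merely spells out what the paper leaves implicit, and your care about the variable count (the reduction reuses $\vec x$ and adds only negations) and the arity bound $r\le\ell-1<\ell$ is precisely the bookkeeping needed.
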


As explained in the beginning of this section, the definition of $\cP$-closed quantifiers
was inspired by the closure property of a $\csp$ quantifier $Q_{\CSP(\mC)}$ that 
arises from a polymorphism of $\mC$. Thus, it is natural to look for sufficient conditions
on the family $\cP$ and the target structure $\mC$ for $Q_{\CSP(\mC)}$
to be
$\cP$-closed. It turns out that the notions of projectivity and partial polymorphism
lead to such a condition.

\begin{proposition}\label{projectiveCSP}
Let $\cP$ be a projective family of $n$-ary partial functions, and let
$\mC$ be a $\tau$-structure. If $p_C$ is a partial polymorphism of
$\mC$, then $Q_{\CSP(\mC)}\in\bQ_\cP$. 
\end{proposition}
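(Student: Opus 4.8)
The plan is to verify the defining condition of $\cP$-closedness directly: given $\tau$-structures $\mA$ and $\mB$ with $A=B$, $\mB\in\CSP(\mC)$, and $\mA\le p_A(\mB)\cup\mB$, I need to produce a homomorphism $\mA\ra\mC$. Since $\mB\in\CSP(\mC)$, fix a homomorphism $h\colon\mB\ra\mC$. The claim is that the very same map $h$ is already a homomorphism $\mA\ra\mC$; as $A=B$ it has the correct domain, so only preservation of the relations of $\tau$ has to be checked.

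For this I would take $R\in\tau$ of arity $r$ and $\vec a\in R^\mA$, and split into two cases. If $\vec a\in R^\mB$, then $h(\vec a)\in R^\mC$ simply because $h$ is a homomorphism out of $\mB$. If instead $\vec a\in R^\mA\setminus R^\mB$, then by the remark following Definition~\ref{P-closed} there are tuples $\vec a_1,\ldots,\vec a_n\in R^\mB$ with $\vec a=\widehat{p_A}(\vec a_1,\ldots,\vec a_n)$; writing $\vec a_j=(a^1_j,\ldots,a^r_j)$ and $\vec a=(a^1,\ldots,a^r)$, this says $p_A(a^i_1,\ldots,a^i_n)=a^i$ for each coordinate $i\in[r]$, and in particular these applications of $p_A$ are defined.

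The key step is then to push $h$ through $p_A$ using projectivity. Since $\cP$ is projective, it is preserved by every function; applying this to $h$, viewed as a function $A\ra C$, at each coordinate gives $p_C(h(a^i_1),\ldots,h(a^i_n))=h\bigl(p_A(a^i_1,\ldots,a^i_n)\bigr)=h(a^i)$ for $i\in[r]$. Collecting these coordinates, $\widehat{p_C}(h(\vec a_1),\ldots,h(\vec a_n))=h(\vec a)$. Now each $h(\vec a_j)$ belongs to $R^\mC$ (again because $h$ is a homomorphism out of $\mB$ and $\vec a_j\in R^\mB$), so by definition $h(\vec a)\in p_C(R^\mC)$, and since $p_C$ is a partial polymorphism of $\mC$ we have $p_C(R^\mC)\subseteq R^\mC$, hence $h(\vec a)\in R^\mC$. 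This completes the verification that $h$ preserves $R$, so $h\colon\mA\ra\mC$ is a homomorphism and $\mA\in\CSP(\mC)$, as required.

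The argument is short and I do not anticipate a genuine obstacle; the only place where anything happens is the commutation $p_C(h(\cdot),\ldots,h(\cdot))=h(p_A(\cdot,\ldots,\cdot))$ on the domain where $p_A$ is defined, which is exactly the ``preserved by all functions'' clause of projectivity. It is worth noting in passing that neither strong invariance nor invariance of $\cP$ is actually used here, only that last clause together with $p_C$ being a partial polymorphism of $\mC$.
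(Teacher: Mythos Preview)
Your proposal is correct and follows essentially the same route as the paper's proof: fix a homomorphism $h\colon\mB\to\mC$, show it is also a homomorphism $\mA\to\mC$ by splitting $\vec a\in R^\mA$ into the cases $\vec a\in R^\mB$ and $\vec a\in R^\mA\setminus R^\mB$, and in the latter case use projectivity to commute $h$ through $p_A$ to $p_C$, then invoke that $p_C$ is a partial polymorphism. Your additional observation that only the ``preserved by all functions'' clause of projectivity is used is accurate and a fair remark.
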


\begin{proof}
Assume that $\mB\in\cK$ and $\mA\le p_A(\mB)\cup\mB$. Then $A=B$ and there is a homomorphism
$h\colon \mB\to \mC$. We show that $h$ is a homomorphism $\mA\to\mC$, and hence
$\mA\in\cK$. Thus let $R\in\tau$, and let $\vec a\in R^\mA$. If $\vec a\in R^\mB$, then $h(\vec a)\in R^\mC$ by assumption. On the other hand, if $\vec a\in R^\mA\setminus R^\mB$, then there exist
tuples $\vec a_1,\ldots,\vec a_n\in R^\mB$ such that  $\vec a=\widehat{p_A}(\vec a_1,\ldots,\vec a_n)$.
Since $h$ is a homomorphism $\mB\to\mC$, we have $h(\vec a_i)\in R^\mC$ for each $i\in [n]$.
Since $p_C$ is a partial polymorphism of $\mC$, we have 
$\widehat{p_C}(h(\vec a_1),\ldots,h(\vec a_n))\in R^\mC$. Finally, since $\cP$ is projective,
we have $h(\vec a)=h(\widehat{p_A}(\vec a_1,\ldots,\vec a_n))=
\widehat{p_C}(h(\vec a_1),\ldots,h(\vec a_n))$, and hence $h(\vec a)\in R^\mC$.
\end{proof}

We can now apply Proposition~\ref{projectiveCSP}
to the families introduced in Example~\ref{partial-polym}.

\begin{example}\label{P-closed-CSP}
(a) Consider a constraint satisfaction problem $\CSP(\mC)$ such that $\mC$ has a Maltsev polymorphism $p\colon \mC^3\to \mC$. We show that $Q_{\CSP(\mC)}\in\bQ_\cM$. As pointed out in Example~\ref{partial-polym}, $M_C$ is a partial polymorphism of $\mC$. Thus, by Proposition~\ref{projectiveCSP} it suffices to show that the Maltsev family $\cM$ is projective.

Thus, assume that $f\colon A\to B$ is a function, and $M_A(a,b,c)$ is defined. Then $a=b$ and $M_A(a,b,c)=c$, or $b=c$ and $M_A(a,b,c)=a$. In the former case we have $f(a)=f(b)$, whence $M_B(f(a),f(b),f(c))=f(c)=f(M_A(a,b,c))$. In the latter case we have $f(b)=f(c)$, whence $M_B(f(a),f(b),f(c))=f(a)=f(M_A(a,b,c))$.

(b) The \emph{$n$-regular hypergraph $m$-colouring} problem is $\CSP(\mH_{n,m})$, where $\mH_{n,m}=([m],R_{n,m})$ is the complete $n$-regular hypergraph with $m$ vertices, i.e., 
\begin{itemize}
\item $R_{n,m}:=\{(v_1,\ldots,v_n)\in [m]^n\mid v_i\not=v_j\text{ for all }1\le i<j\le m\}$. 
\end{itemize}
We show that $Q_{\CSP(\mH_{n,m})}\in\bQ_\cMaj$ for all $n\ge 2$ and $m\ge n$. By Proposition~\ref{projectiveCSP} it suffices to show that $m_{[m]}$ is a partial polymorphism of $\mH_{n,m}$, and the family $\cMaj$ is projective.

To see that $m_{[m]}$ is a partial polymorphism of $\mH_{n,m}$, assume that $\vec a_i=(a_i^1,\ldots,a_i^n)\in R_{n,m}$ for $i\in [3]$, and $\vec a=(a_1,\ldots,a_n)=\hat{m_{[m]}}(\vec a_1,\vec a_2,\vec a_3)$. By the definition of $m_{[m]}$, for each $j\in [n]$ we have $|\{i\in [3]\mid a^j_i=a_j\}|\ge 2$. Thus for any two distinct $j,k\in [n]$, there is $i\in [3]$ such that $a_j=a^j_i$ and $a^k_i=a_k$, whence $a_j\not=a_k$.  Thus we have $\vec a\in R_{n,m}$.

To show that $\cMaj$ is projective, assume that $f\colon A\to B$ is a function, and $m_A(a,b,c)$ is defined. Then $a=b=m_A(a,b,c)$, $a=c=m_A(a,b,c)$ or $b=c=m_A(a,b,c)$. In the first case we have $f(m_A(a,b,c))=f(a)=f(b)= m_B(f(a),f(b),f(c))$, as desired. The two other cases are similar. 

(c) In the same way we can show that the family $\cN_\ell$ of partial near-unanimity polymorphisms is projective for any $\ell\ge 3$.  We relax now the notion of hypergraph coloring as follows: Let $\mH=(H,R)$, where $R\subseteq H^{[n]}$, be a hypergraph and let $k<n$. A \emph{$k$-weak $m$-coloring} of $\mH$ is a function $f\colon H\to [m]$ such that for all $e\in R$ and all $i\in [m]$, $|e\cap f^{-1}[\{i\}]|\le k$. Observe now that there exists a $k$-weak $m$-coloring of $\mH$ if and only if $\mH\in\CSP(\mH^k_{n,m})$, where $\mH^k_{n,m}=([m],R^k_{n,m})$ is the structure such that
\begin{itemize}
\item $R^k_{n,m}:=\{(v_1,\ldots,v_n)\in [m]^n\mid |\{v_i\mid i\in I\}|\ge 2 \text{ for all }I\subseteq [n]\text{ with }|I|= k+1\}$. 
\end{itemize}
Note that $\mH^1_{n,m}=\mH_{n,m}$, whence $m_{[m]}=n^3_{[m]}$ is a partial polymorphism of
$\mH^1_{n,m}$. It is straightforward to generalize this to $\ell>3$:
$n^\ell_{[m]}$ is a partial polymorhism of $\mH^{\ell-2}_{n,m}$. Thus
by Proposition~\ref{projectiveCSP}, the $\csp$ quantifier
$Q_{\CSP(\mH^{\ell-2}_{n,m})}$ is $\bQ_{\cN_\ell}$-closed.
\end{example}

\begin{remark}
As shown in Example~\ref{P-closed-CSP}(b), the partial majority function $m_{[m]}$ 
is a partial polymorphism of the structure $\mH_{n,m}$. However, there does not exist
any polymorphism $p\colon [m]^3\to [m]$ that extends $m_{[m]}$. This can be verified
directly, but it also follows from the fact that $\CSP(\mC)$ is of
bounded width for any $\mC$ that has a majority polymorphism (\cite{FederV98}), but
$\CSP(\mH_{n,m})$ is not of bounded width.
The same holds for the partial functions $n^\ell_{[m]}$ and the structures $\mH^k_{n,m}$ 
in Example~\ref{P-closed-CSP}(c).
\end{remark}

\section{Pebble game for $\cP$-closed quantifiers}\label{Games}

In this section we introduce a pebble game that characterizes equivalence of structures with respect to  $L^\omega_{\infty\omega}(\bQ_\cP)$, the extension of the infinitary $k$-variable logic $L^\omega_{\infty\omega}$ by the class of all $\cP$-closed quantifiers.

We fix a family 
$\cP$ of $n$-ary partial functions for the rest of the section.
Given two structures $\mA$ and $\mB$ of the same vocabulary, and assignments
$\alpha$ and $\beta$ on $\mA$ and $\mB$, respectively, such that
$\dom(\alpha)=\dom(\beta)$, we write
$(\mA,\alpha)\equiv^k_{\infty\omega,\cP}(\mB,\beta)$ if the equivalence
\begin{itemize}
\item[] $(\mA,\alpha)\models\varphi\iff (\mB,\beta)\models\varphi$
\end{itemize}
holds for all formulas $\varphi\in L^k_{\infty\omega}(\bQ_\cP)$ with free
variables in $\dom(\alpha)$. 
If $\alpha=\beta=\emptyset$, we write simply $\mA\equiv^k_{\infty\omega,\cP}\mB$ instead of
$(\mA,\emptyset)\equiv^k_{\infty\omega,\cP}(\mB,\emptyset)$. 

The basic idea of our pebble game for a pair $(\mA,\mB)$ 
of structures is the following. In each round  Duplicator gives a
bijection $f\colon A\to B$, just like in the bijection games of
\cite{Hella96}, but instead of using $\vec b=f(\vec a)$ as answer for
Spoiler's move $\vec a\in A^r$, she is allowed to give a sequence
$\vec b_1,\ldots,\vec b_n\in B^r$ of alternative answers as long as
$\vec b=\widehat{p_B}(\vec b_1,\ldots,\vec b_n)$. Spoiler completes the round by
choosing one of these alternatives $\vec b_i$. Spoiler wins if $\vec
a\mapsto\vec b_i$ is not a partial isomorphism; otherwise the game
carries on from the new position.

Observe now that if Duplicator 
has a winning strategy for the first round of the game, then $f(\mA)\le  p_B(\mB)\cup\mB$. Indeed, if Spoiler chooses a tuple $\vec a\in R^\mA$, then Duplicator has to answer by either the tuple $f(\vec a)$, or a sequence $\vec b_1,\ldots,\vec b_n\in B^r$ of tuples such that $f(\vec a)=\widehat{p_B}(\vec b_1,\ldots,\vec b_n)$; 
in the first case she loses if $f(\vec a)\not\in R^\mB$, and in the second case she loses if $\vec b_i\not\in R^\mB$ for some $i\in [n]$.
Thus if Duplicator 
has a winning strategy in the one round game and $\mB\in\cK$ for some $\cP$-closed quantifier $Q_\cK$, then $f(\mA)\in\cK$, and since $f$ is an isomorphism  $\mA\to f(\mA)$, also $\mA\in\cK$. In other words, if $\mB\models Q_\cK\vec y\,(R_1(\vec y),\ldots,R_m(\vec y))$, then $\mA\models Q_\cK\vec y\,(R_1(\vec y),\ldots,R_m(\vec y))$. The reverse implication is obtained by using the move described above with the structures switched.

By allowing only $k$ variables and repeating rounds indefinitely (unless Spoiler wins at some round), we obtain a game such that Duplicator having a winning strategy implies $\mA\equiv^k_{\infty\omega,\cP}\mB$. However, in order to prove the converse implication we need to modify the rules explained above. This is because $p_B(\mB)\cup\mB$ is not necessarily closed with respect to the function $p_B$, and in the argument above it would equally well suffice that $f(\mA)\le\mC$ for some structure $\mC$ that is obtained by applying $p_B$ repeatedly to $\mB$. In the next definition we formalize the idea of such repeated applications.

\begin{definition}
Let $p\colon A^n\to A$ be a partial function, and let $R\subseteq A^r$.
We define a sequence $\Gamma^i_p(R)$, $i\in\omega$, of $r$-ary relations on $A$ by the following recursion:
\begin{itemize}
    \item $\Gamma^0_p(R):= R$;  $\Gamma^{i+1}_p(R):=p(R)\cup\Gamma^i_p(R)$.
\end{itemize}
Furthermore, we define $\Gamma^\omega_p(R)=\bigcup_{i\in\omega}\Gamma^i_p(R)$.

This is generalized to $\tau$-structures in the natural way: for all $i\in\omega\cup\{\omega\}$,
$\Gamma^i_p(\mA)$ is the $\tau$-structure $\mC$ such that $C=A$ and  $R^\mC:=\Gamma^i_p(R^\mA)$ for each $R\in\tau$.
\end{definition}

Note that since $\Gamma^i_p(R)\subseteq\Gamma^{i+1}_p(R)$ for all $i\in\omega$ (assuming $A$ is finite) there exists $j\le |A^r|$ such that $\Gamma^\omega_p(R)=\Gamma^j_p(R)$. Similarly for any finite structure $\mA$, $\Gamma^\omega_p(\mA)=\Gamma^j_p(\mA)$ for some $j\le |A^r|$, where $r$ is the maximum arity of relations in $\mA$.

\begin{lemma}\label{P-closed-char}
Let $\cP$ a family of $n$-ary partial functions.
A quantifier is $\cP$-closed if and only if the implication
$$  
    \text{$\mB\in\cK$ and
$\mA\le\Gamma^\omega_{p_A}(\mB)\;\Longrightarrow\;\mA\in\cK$}
$$  
holds for all structures $\mA$ and $\mB$ with $A=B$.
\end{lemma}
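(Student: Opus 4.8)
The plan is to prove the two directions separately, with the reverse ($\Leftarrow$) direction being essentially immediate and the forward ($\Rightarrow$) direction requiring the real work.

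First I would handle the easy direction: suppose $Q_\cK$ satisfies the stated implication with $\Gamma^\omega_{p_A}(\mB)$. To see that $Q_\cK$ is $\cP$-closed, take $\mB\in\cK$ and $\mA\le p_A(\mB)\cup\mB$ with $A=B$. Since $\Gamma^1_{p_A}(\mB)=p_A(\mB)\cup\mB$ (unpacking the recursion: $R^{\Gamma^1_{p_A}(\mB)}=p_A(R^\mB)\cup\Gamma^0_{p_A}(R^\mB)=p_A(R^\mB)\cup R^\mB$) and $\Gamma^1_{p_A}(\mB)\le\Gamma^\omega_{p_A}(\mB)$, we get $\mA\le\Gamma^\omega_{p_A}(\mB)$, so $\mA\in\cK$ by hypothesis. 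This just uses monotonicity of the $\Gamma^i$ sequence and the base-case identity.

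For the forward direction, assume $Q_\cK\in\bQ_\cP$, i.e.\ the one-step closure property holds, and suppose $\mB\in\cK$ and $\mA\le\Gamma^\omega_{p_A}(\mB)$ with $A=B$. The key observation is that, since $A$ is finite, $\Gamma^\omega_{p_A}(\mB)=\Gamma^j_{p_A}(\mB)$ for some $j$ (as noted in the remark after the definition), so it suffices to show by induction on $i$ that $\Gamma^i_{p_A}(\mB)\in\cK$ for every $i\in\omega$; then downward monotonicity of $Q_\cK$ (Proposition following Definition~\ref{P-closed}) gives $\mA\in\cK$ from $\mA\le\Gamma^j_{p_A}(\mB)\in\cK$. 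The base case $i=0$ is just $\mB\in\cK$. For the inductive step, assume $\mC:=\Gamma^i_{p_A}(\mB)\in\cK$; I want $\Gamma^{i+1}_{p_A}(\mB)=p_A(\mC)\cup\mC\in\cK$. But here is the subtle point: the $\cP$-closure condition in Definition~\ref{P-closed} is stated in terms of $p_A$ computed on the \emph{universe} $A$, and since $\mC$ has the same universe $A=B$ as $\mB$, we have $p_A(\mC)$ literally equal to $p_{C}(\mC)$ with $C=A$; applying the closure property with $\mC$ in place of $\mB$ (legitimate since $\mC\in\cK$) to the structure $p_A(\mC)\cup\mC$ — which trivially satisfies $p_A(\mC)\cup\mC\le p_A(\mC)\cup\mC$ — yields $p_A(\mC)\cup\mC\in\cK$, i.e.\ $\Gamma^{i+1}_{p_A}(\mB)\in\cK$. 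This completes the induction.

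The step I expect to be the main obstacle — or at least the one demanding the most care — is verifying that $\Gamma^{i+1}_{p_A}(R^\mB)$ really coincides with $p_A(\Gamma^i_{p_A}(R^\mB))\cup\Gamma^i_{p_A}(R^\mB)$ so that the $\cP$-closure condition applies with argument structure $\Gamma^i_{p_A}(\mB)$; one must check the recursion is being read as ``one more application of $p_A$ on top of the previously accumulated relation'' rather than ``$p_A$ applied to the original $R$'', which is exactly the form $p_A(\mathord{-})\cup(\mathord{-})$ needed in Definition~\ref{P-closed}. Once that bookkeeping is pinned down, everything else is a routine finite induction plus an appeal to downward monotonicity.
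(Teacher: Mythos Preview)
Your proposal is correct and follows essentially the same approach as the paper: the easy direction via $\Gamma^1_{p_A}(\mB)=p_A(\mB)\cup\mB\le\Gamma^\omega_{p_A}(\mB)$, and the forward direction via an induction showing $\Gamma^i_{p_A}(\mB)\in\cK$ for all $i$, using finiteness to stop at some $j$. The only cosmetic difference is in the last step: you invoke the downward-monotonicity Proposition to pass from $\mA\le\Gamma^j_{p_A}(\mB)\in\cK$ to $\mA\in\cK$, whereas the paper applies the $\cP$-closure condition one more time directly (noting $\mA\le\Gamma^j_{p_A}(\mB)\le p_A(\Gamma^j_{p_A}(\mB))\cup\Gamma^j_{p_A}(\mB)$); these amount to the same thing. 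Your care about reading the recursion as $\Gamma^{i+1}_{p_A}(R)=p_A(\Gamma^i_{p_A}(R))\cup\Gamma^i_{p_A}(R)$ is well placed, and matches how the paper uses it in its own proof.
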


\begin{proof}
Assume first that $Q_\cK$ is $\cP$-closed, $\mB\in\cK$ and
$\mA\le\Gamma^\omega_{p_A}(\mB)$. We show first by induction on $i$ that $\Gamma^i_{p_A}(\mB)\in\cK$ for all $i\in\omega$. For $i=0$ this holds by assumption. If $\Gamma^i_{p_A}(\mB)\in\cK$, then $\Gamma^{i+1}_{p_A}(\mB)=p_A(\mC)\cup\mC$, for $\mC=\Gamma^i_{p_A}(\mB)$, and hence $\Gamma^{i+1}_{p_A}$ follows from the assumption that $Q_\cK$ is $\cP$-closed. 

As noted above, there exists $j\in\omega$ such that $\Gamma^\omega_{p_A}(\mB)=\Gamma^j_{p_A}(\mB)$. Thus we have $\mA\le\Gamma^j_{p_A}(\mB)\le\Gamma^{j+1}_{p_A}(\mB)=p_A(\Gamma^j_{p_A}(\mB))\cup\Gamma^j_{p_A}(\mB)$.
Since $\Gamma^j_{p_A}(\mB)\in\cK$ and $\cK$ is $\cP$-closed, it follows that $\mA\in\cK$.

Assume then that the implication  
\begin{quote}
      $(*)$ \quad $\mB\in\cK$ and
$\mA\le\Gamma^\omega_{p_A}(\mB)\;\Longrightarrow\;\mA\in\cK$
\end{quote}
holds for all $\mA$ and $\mB$ with $A=B$. Assume further that $\mB\in\cK$ and $\mA\le p_A(\mB)\cup\mB$. By definition $p_A(\mB)\cup\mB=\Gamma^1_{p_A}(\mB)$, and since $\Gamma^1_{p_A}(\mB)\le\Gamma^\omega_{p_A}(\mB)$, we have $\mA\le\Gamma^\omega_{p_A}(\mB)$. Thus $\mA\in\cK$ follows from the implication $(*)$.
\end{proof}

\subsection{Game for $\cP$-closed quantifiers}

We are now ready to give the formal definition of our pebble game for $\cP$-closed quantifiers.
Let $k$ be a positive integer.
Assume that $\mA$ and $\mB$ are 
$\tau$-structures for a relational vocabulary $\tau$. Furthermore, assume that
$\alpha$ and $\beta$ are assignments on $\mA$ and $\mB$, respectively, such that 
$\dom(\alpha)=\dom(\beta)\subseteq X$, where $X=\{x_1,\ldots,x_k\}$. 
The \emph{$k$-pebble $\cP$ game} for $(\mA,\alpha)$ and 
$(\mB,\beta)$ is played between \emph{Spoiler} and \emph{Duplicator}. We denote the game 
by $\PG_k^\cP(\mA,\mB,\alpha,\beta)$, and we use the shorthand notation 
 $\PG_k^\cP(\alpha,\beta)$ whenever $\mA$ and $\mB$ are clear from the context.

\begin{definition}\label{modified-game} 
The rules of the game $\PG^\cP_k(\mA,\mB,\alpha,\beta)$ 
are the following: 
\begin{enumerate}
\item If $\alpha\mapsto\beta\notin\PI(\mA,\mB)$, then the game ends, and Spoiler wins.

\item If (1) does not hold, there are two types of moves that Spoiler can choose to play: 
\begin{itemize}

\item {\bf Left $\cP$-quantifier move:} Spoiler starts by choosing $r\in [k]$ 
and an $r$-tuple 
$\vec y\in X^r$ of distinct variables. 
Duplicator responds with 
a bijection $f\colon B\to A$. Spoiler answers by choosing an $r$-tuple 
$\vec b\in B^r$. Duplicator answers by choosing $P\subseteq A^r$
such that $f(\vec b)\in\Gamma^\omega_{p_A}(P)$. 
Spoiler completes the round by choosing $\vec a\in P$.
The players continue by playing $\PG^\cP_k(\alpha',\beta')$, 
where $\alpha':=\alpha[\vec a/\vec y]$ and $\beta':=\beta[\vec b/\vec y]$.

\item {\bf Right $\cP$-quantifier move:} Spoiler starts by choosing $r\in [k]$ 
and an $r$-tuple 
$\vec y\in X^r$ of distinct variables. 
Duplicator chooses next 
a bijection $f\colon A\to B$. Spoiler answers by choosing an $r$-tuple 
$\vec a\in A^r$. Duplicator answers by choosing $P\subseteq B^r$
such that $f(\vec a)\in\Gamma^\omega_{p_B}(P)$. 
Spoiler completes the round by choosing $\vec b\in P$.
The players continue by playing $\PG^\cP_k(\alpha',\beta')$,
where $\alpha':=\alpha[\vec a/\vec y]$ and $\beta':=\beta[\vec b/\vec y]$.
\end{itemize}

\item Duplicator wins the game if Spoiler does not win it in a finite number of rounds.
\end{enumerate}
\end{definition}

We now prove that the game $\PG^\cP_k$ indeed characterizes equivalence of structures with respect to the infinitary $k$-variable logic with all $\cP$-closed quantifiers. 

\begin{theorem}\label{fullCSPGchar}
Let $\cP$ be an invariant family of 
partial functions. Then Duplicator has a winning strategy in $\PG^\cP_k(\mA,\mB,\alpha,\beta)$ if, and only if, $(\mA,\alpha)\equiv^{k}_{\infty\omega,\cP}(\mB,\beta)$.
\end{theorem}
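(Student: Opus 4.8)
The plan is to prove the two directions of the biconditional separately, by induction on the structure of formulas (for the direction from game-equivalence to logical equivalence) and by a strategy-construction argument (for the converse).

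\medskip

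\textbf{From Duplicator's winning strategy to logical equivalence.} First I would prove the contrapositive of one half: if $(\mA,\alpha)\not\equiv^k_{\infty\omega,\cP}(\mB,\beta)$, then Spoiler wins $\PG^\cP_k(\mA,\mB,\alpha,\beta)$. This goes by induction on a formula $\varphi\in L^k_{\infty\omega}(\bQ_\cP)$ with $(\mA,\alpha)\models\varphi$ but $(\mB,\beta)\not\models\varphi$. The atomic case gives Spoiler an immediate win via rule (1); Boolean combinations (including infinitary ones) and the handling of free variables are routine and reduce to smaller formulas or to repositioning pebbles. The essential case is when $\varphi = Q_\cK\vec y\,(\psi_1,\ldots,\psi_m)$ for a $\cP$-closed quantifier $Q_\cK$. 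Here I would have Spoiler play, say, a left $\cP$-quantifier move with the tuple $\vec y$: Duplicator supplies a bijection $f\colon B\to A$; Spoiler must use $f$ to find a tuple $\vec b\in B^r$ on which to challenge. The key point is that if Duplicator could always respond with a set $P\subseteq A^r$ such that $f(\vec b)\in\Gamma^\omega_{p_A}(P)$ and such that Spoiler's eventual choice $\vec a\in P$ always led to a position where he cannot exploit a difference in the $\psi_i$'s, then — using Lemma~\ref{P-closed-char} and the fact that $\cP$ is invariant so that $f$ transports $\Gamma^\omega_{p_B}$ to $\Gamma^\omega_{p_A}$ — one shows the interpreted structure $\cI(\mB,\beta)$ is $\le\Gamma^\omega_{p}$ of a structure in $\cK$ isomorphic (via $f$) to $\cI(\mA,\alpha)\in\cK$, forcing $\cI(\mB,\beta)\in\cK$, i.e.\ $(\mB,\beta)\models\varphi$, a contradiction. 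Hence Spoiler has a $\vec b$ and a relation index $i$ for which, whatever $P$ Duplicator picks, there is $\vec a\in P$ realising a genuine difference on $\psi_i$, and by induction Spoiler wins from the resulting position.

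\medskip

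\textbf{From logical equivalence to Duplicator's winning strategy.} For the converse I would show that the relation $\equiv^k_{\infty\omega,\cP}$ (on pairs of structures-with-assignments of the same domain) is itself a winning strategy for Duplicator: that is, Duplicator maintains the invariant that the current position $(\alpha',\beta')$ satisfies $(\mA,\alpha')\equiv^k_{\infty\omega,\cP}(\mB,\beta')$. The invariant immediately rules out a Spoiler win by rule (1), since $\equiv^k_{\infty\omega,\cP}$ refines agreement on all atomic formulas. The work is in showing Duplicator can maintain the invariant under a $\cP$-quantifier move. Consider a left move with tuple $\vec y$. The standard argument is: define, for each relation $P\subseteq A^r$, a canonical $L^k_{\infty\omega}(\bQ_\cP)$-formula $\chi_P(\vec y)$ describing the $\equiv^k_{\infty\omega,\cP}$-type realised, and use the fact that there are only finitely many such types over $k$ variables to package them into a single interpretation; then if Duplicator had no good bijection $f$, one would build a formula of the form $Q_\cK\vec y\,(\ldots)$ distinguishing $(\mA,\alpha)$ from $(\mB,\beta)$, where $\cK$ is engineered to be $\cP$-closed (this is where Lemma~\ref{P-closed-char}, and closure under $\Gamma^\omega$, is used to verify $\cP$-closedness of the constructed class). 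Concretely: Duplicator picks $f$ so that $f$-images of $\equiv$-types match; then for Spoiler's $\vec b$, Duplicator takes $P$ to be (an $f$-pullback of) the set of tuples $\vec a$ with $(\mA,\alpha[\vec a/\vec y])\equiv^k_{\infty\omega,\cP}(\mB,\beta[\vec b/\vec y])$, and one must check $f(\vec b)\in\Gamma^\omega_{p_A}(P)$ — which is forced because otherwise the class ``$\mathcal{D}$ of structures $\mD$ with $(\mD\text{-reduct})\in$ the appropriate type-class, closed under $\Gamma^\omega$'' would be $\cP$-closed, definable, and separate $\mA$ from $\mB$. Whatever $\vec a\in P$ Spoiler then chooses, the invariant holds at the new position by construction.

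\medskip

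\textbf{Main obstacle.} The crux — and the step I expect to be most delicate — is the construction in the converse direction of a $\cP$-closed class $\cK$ from the (finitely many) $\equiv^k_{\infty\omega,\cP}$-types, and the verification that the resulting quantifier $Q_\cK$ is genuinely $\cP$-closed via Lemma~\ref{P-closed-char}. One has to close the natural ``type class'' under $\mA\le\Gamma^\omega_{p_A}(\mB)$ without destroying definability, and argue that this closure does not accidentally pull in a structure on the $\mB$-side, which is exactly the content of "$f(\vec b)\in\Gamma^\omega_{p_A}(P)$ must hold". Bridging the $\Gamma^\omega$-closure in the game rule with the single-step closure in Definition~\ref{P-closed} is handled by Lemma~\ref{P-closed-char}; invariance of $\cP$ is what lets the bijection $f$ commute with the $\Gamma^\omega$ operators, so both the definition of $P$ and the transfer of membership between $\mA$- and $\mB$-sides go through. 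The Boolean and infinitary book-keeping, and the reduction of the two-sided game to handling one side at a time, are routine by comparison.
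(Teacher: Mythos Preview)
Your proposal is correct and takes essentially the same approach as the paper: induction on formulas for one direction, and for the converse, constructing a $\cP$-closed quantifier from the finitely many $\equiv^k_{\infty\omega,\cP}$-types and closing under $\Gamma^\omega$ via Lemma~\ref{P-closed-char}, which then yields both Duplicator's bijection $f$ (as the isomorphism witnessing $\cI(\mB,\beta)\in\cK$) and her set $P$ (as the type-class $R_j^{\cI(\mA,\alpha)}$). One minor point: invariance of $\cP$ is not actually needed in the forward direction---the paper works entirely on the $A$-side there and uses only that $\cK$ is isomorphism-closed---so its essential role is exactly where you locate it in your ``main obstacle'' paragraph, namely in verifying that the isomorphism-closure of $\{\mD\mid\mD\le\Gamma^\omega_{p_A}(\cI(\mA,\alpha))\}$ remains $\cP$-closed.
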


\begin{proof}
$\Rightarrow$: We prove by induction on $\varphi\in L^k_{\infty\omega}(\bQ_\cP)$ 
that (for any assignments $\alpha$ and $\beta$) if Duplicator 
has a winning strategy in $\PG^\cP_k(\alpha,\beta)$, then 
$(\mA,\alpha)\models\varphi\iff(\mB,\beta)\models\varphi$.
\begin{itemize}
\item If $\varphi$ is an atomic formula, the claim follows from the fact that Spoiler always 
wins the game $\PG^\cP_k(\alpha,\beta)$ immediately if $\alpha\mapsto\beta\notin\PI(\mA,\mB)$.

\item The cases $\varphi=\lnot\psi$,  $\varphi=\bigvee\Psi$ and $\varphi=\bigwedge\Psi$ 
are straightforward.

\item By Proposition~\ref{exists-P-closed}, the negation of the existential quantifier is in $\bQ_\cP$,
and hence we do not need to consider the case $\varphi=\exists x_i\psi$ separately.

\item Consider then the case $\varphi=Q_\cK \vec y\,\cI$ 
for some $r$-ary 
quantifier $Q_\cK\in\bQ_\cP$ and interpretation
$\cI=(\psi_1,\ldots,\psi_\ell)$. 
We start by assuming that $(\mA,\alpha)\models\varphi$. 
Thus, $\cI(\mA,\alpha):=(A,R_1,\ldots,R_\ell)\in\cK$. 
Let Spoiler play in the game $\PG^\cP_k(\alpha,\beta)$
a left $\cP$-quantifier move with $r$ and the tuple $\vec y\in X^r$, and let  
$f\colon B\to A$ be the bijection given by the winning strategy of Duplicator. Let $\cI(\mB,\beta):=(B,R'_1,\ldots,R'_\ell)$, and for each
$i\in[\ell]$, let $S_i:=f(R'_i)$. We claim that $\mD:=(A,S_1,\ldots,S_\ell)\in\cK$. Since $f$ is an isomorphism 
$\cI(\mB,\beta)\to\mD$, it follows then that $(\mB,\beta)\models\varphi$.  

To prove the claim it suffices to show that $\mD\le\Gamma^\omega_{p_A}(\cI(\mA,\alpha))$,
since then $\mD\in\cK$ by Lemma~\ref{P-closed-char} and the assumption that $Q_\cK$ is $\cP$-closed.
To show this, let $i\in [\ell]$ and $\vec a\in S_i$. We let Spoiler choose the
tuple $\vec b=f^{-1}(\vec a)$ as his answer to the bijection $f$. Thus, $(\mB,\beta[\vec b/\vec y])\models\psi_i$.
Let $P\subseteq A^r$ be the answer of Duplicator. Then by the rules of the game
$\vec a\in\Gamma^\omega_{p_A}(P)$, and Duplicator has 
a winning strategy in the game $\PG^\cP_k(\alpha[\vec a/\vec y],\beta[\vec b/\vec y])$
for all $\vec a\in P$. Hence by induction hypothesis 
$(\mA,\alpha[\vec a/\vec y])\models\psi_i$, i.e., $\vec a\in R_i$,
holds for all $\vec a\in P$. This shows that $S_i\subseteq \Gamma^\omega_{p_A}(R_i)$, and since this holds for all $i\in [\ell]$, we see that $\mD\le\Gamma^\omega_{p_A}(\cI(\mA,\alpha))$.

By using the right $\cP$-quantifier move in place of the left quantifier move, we can prove  
that $(\mB,\beta)\models\varphi$ implies $(\mA,\alpha)\models\varphi$. Thus, 
$(\mA,\alpha)\models\varphi\iff(\mB,\beta)\models\varphi$, as desired.
\end{itemize}

$\Leftarrow$: Assume then that $(\mA,\alpha)\equiv^{k}_{\infty\omega,\cP}(\mB,\beta)$. Clearly it suffices to show that Duplicator can play in the first round of the game $\PG^\cP_k(\alpha,\beta)$ in such a way that $(\mA,\alpha')\equiv^{k}_{\infty\omega,\cP}(\mB,\beta')$ holds, where $\alpha'$ and $\beta'$ are the assignments arising from the choices of Spoiler and Duplicator. 

Assume first that Spoiler decides to play a left $\cP$-quantifier move in the first round of $\PG^\cP_k(\alpha,\beta)$. Let $\vec y\in X^r$ be the tuple of variables he chooses. Since $A$ and $B$ are finite, for each $\vec a\in A^r$ there is a formula $\Psi_{\vec a}\in L^k_{\infty\omega}(\bQ_\cP)$ such that for any $\tau$-structure $\mC$ of size at most $\max\{|A|,|B|\}$, any assignment $\gamma$ on $\mC$, and any tuple $\vec c\in C^r$ we have 
\begin{itemize}
\item $(\mA,\alpha[\vec a/\vec y])\equiv^{k}_{\infty\omega,\cP}(\mC,\gamma[\vec c/\vec y])$ if and only if $(\mC,\gamma[\vec c/\vec y])\models\Psi_{\vec a}$.
\end{itemize}

Let $\vec c_1,\ldots,\vec c_\ell$ be a fixed enumeration of the set $A^r$, and let $\cI$ be the interpretation $(\Psi_1,\ldots,\Psi_m)$, where $\Psi_j:=\Psi_{\vec c_j}$ for each $j\in [m]$.
We define $\cK$ to be the closure of the class $\{\mD\mid \mD\le\Gamma^\omega_{p_A}(\cI(\mA,\alpha))\}$ under isomorphisms. Note that if 
$\mD\le\Gamma^\omega_{p_A}(\cI(\mA,\alpha))$ and $\mE\le\Gamma^\omega_{p_A}(\mD)$, then clearly $\mE\le\Gamma^\omega_{p_A}(\cI(\mA,\alpha))$. Hence by Lemma~\ref{P-closed-char}, the quantifier $Q_\cK$ is $\cP$-closed. Moreover, since $\cI(\mA,\alpha)\in\cK$, we have $(\mA,\alpha)\models Q_\cK\vec y\, \cI$, and consequently by our assumption, $(\mB,\beta)\models Q_\cK\vec y\, \cI$. Thus, there is a structure $\mD\le\Gamma^\omega_{p_A}(\cI(\mA,\alpha))$ and an isomorphism $f\colon \cI(\mB,\beta)\to\mD$. We let Duplicator to use the bijection $f\colon B\to A$ as her answer to the choice $\vec y$ of Spoiler.

Let $\vec b\in B^r$ be the answer of Spoiler to $f$, and let $\vec a=f(\vec b)$. Clearly $(\mA,\alpha)\models \forall\vec y\, \bigvee_{j\in [\ell]}\Psi_j$, whence there exists $j\in [\ell]$ such that $(\mB,\beta[\vec b/\vec y])\models\Psi_j$, or in other words, $\vec b\in R^{\cI(\mB,\beta)}_j$. Since $f$ is an isomorphism $\cI(\mB,\beta)\to\mD$, we have $\vec a\in R_j^\mD$. 
We let Duplicator to use $P:=R^{\cI(\mA,\alpha)}_j$ as her answer to the choice $\vec b$ of Spoiler; this is a legal move since $\mD\le\Gamma^\omega_{p_A}(\cI(\mA,\alpha))$.
Observe now that since $P=R_j^{\cI(\mA,\alpha)}$, we have $(\mA,\alpha[\vec a/\vec y])\models\Psi_{\vec c_j}$, and consequently $(\mA,\alpha[\vec c_j/\vec y])\equiv^{k}_{\infty\omega,\cP}(\mA,\alpha[\vec a/\vec y])$, for all $\vec a\in P$. On the other hand we also have $(\mB,\beta[\vec b/\vec y])\models\Psi_{\vec c_j}$, and hence $(\mA,\alpha[\vec c_j/\vec y])\equiv^{k}_{\infty\omega,\cP}(\mB,\beta[\vec b/\vec y])$. Thus the condition $(\mA,\alpha')\equiv^{k}_{\infty\omega,\cP}(\mB,\beta')$, where $\alpha'=\alpha[\vec a/\vec y]$ and $\beta'=\beta[\vec b/\vec y]$, holds after the first round of $\PG^\cP_k(\alpha,\beta)$ irrespective of the choice $\vec a\in P$ of Spoiler in the end of the round. 
\smallskip

The case where Spoiler starts with a right $\cP$-quantifier move is handled in the same way by switching the roles of $(\mA,\alpha)$ and $(\mB,\beta)$.
\end{proof}

\section{Playing the game}\label{sec:CFI}
In this section we use the game $\PG^k_{\cP}$ to show inexpressibility of a property of finite structures in the infinitary finite variable logic $L^\omega_{\infty\omega}$ augmented by all $\cN_\ell$-closed quantifiers.  More precisely, we prove that the Boolean constraint satisfaction problem $\CSP(\mC_\ell)$, where $\mC_\ell$ is the structure with $C=\{0,1\}$ and two $\ell$-ary relations $R_0 = \{(b_1,\ldots,b_{\ell}) \mid \sum_{i\in [\ell]} b_i \equiv 0 \pmod 2\}$ and $R_1 = \{(b_1,\ldots,b_{\ell}) \mid \sum_{i\in [\ell]} b_i \equiv 1 \pmod 2\}$, is not definable in $L^\omega_{\infty\omega}(\bQ_{\cN_\ell})$. 

\subsection{CFI construction}

In the proof of the undefinability of $\CSP(\mC_\ell)$ we use a variation of the well-known CFI construction, due to Cai, Fürer and Immerman \cite{CFI92}. Our construction is a minor modification of the one that was used in \cite{Hella96} for producing non-isomorphic structures on which Duplicator wins the $k,n$-bijection game. 
We start by explaining the details of the construction.

Let $G=(V,E,\le^G)$ be a connected $\ell$-regular ordered graph. For each vertex $v\in V$, we use the notation
$E(v)$ for the set of edges adjacent to $v$ and $\vec
e(v)=(e_1,\ldots, e_\ell)$ for the tuple that lists $E(v)$ in the
order $\le^G$. The CFI structures we use have in the universe two
elements $(e,1)$ and $(e,2)$ for each $e \in E$, and two $\ell$-ary relations that connect such pairs $(e,i)$ for edges $e$ that are adjacent to some vertex $v\in V$. 

\begin{definition}
Let $G=(V,E,\le^G)$ be a connected $\ell$-regular ordered graph and let $U\subseteq V$. 
We define a CFI structure $\mA_\ell(G,U)=(A_\ell(G),R_0^{\mA_\ell(G,U)},R_1^{\mA_\ell(G,U)})$, 
where $\ar(R_0)= \ar(R_1)=\ell$, as follows.
\begin{itemize}
\item $A_\ell(G):=E\times [2]$,
\item $R_0^{\mA_\ell(G,U)}:=\bigcup_{v\in V\setminus U}R(v)\cup\bigcup_{v\in U}\tilde{R}(v)$ and $R_1^{\mA_\ell(G,U)}:=\bigcup_{v\in  U}R(v)\cup\bigcup_{v\in V\setminus U}\tilde{R}(v)$, where 
\begin{itemize}
\item $R(v):= \{((e_1,i_1),\ldots,(e_\ell,i_\ell))\mid (e_1,\ldots,e_\ell)=\vec e(v),\, \sum_{j\in [\ell]}i_j =0 \pmod 2\}$, and
\item $\tilde{R}(v):= \{((e_1,i_1),\ldots,(e_\ell,i_\ell))\mid (e_1,\ldots,e_\ell)=\vec e(v),\, \sum_{j\in [\ell]}i_j = 1 \pmod 2\}$.
\end{itemize}
\end{itemize}
\end{definition}

For each $v\in V$, we denote the set $E(v)\times [2]$ by $A(v)$. Furthermore, we define
$\mA_\ell(v):=(A(v),R(v),\tilde{R}(v))$ and $\tilde{\mA}_\ell(v):=(A(v),\tilde{R}(v),R(v))$.

By a similar argument as in the CFI structures constructed in \cite{Hella96} and \cite{Hella23} it can be proved that $\mA_\ell(G,U)$ and $\mA_\ell(G,U')$ are isomorphic if and only if $|U|$ and $|U'|$ are of the same parity.
%
%
%
%
%
We choose $\mA_\ell^\even(G):=\mA_\ell(G,\emptyset)$ and 
$\mA_\ell^\odd(G):=\mA_\ell(G,\{v_0\})$
as representatives of these two isomorphism classes, 
where $v_0$ is the least element of $V$ with respect to the linear order $\le^G$. 
We show first that these structures are separated by $\CSP(\mC_\ell)$. 

\begin{lemma}\label{parity-lemma}
$\mA_\ell^\even(G)\in\CSP(\mC_\ell)$, but $\mA_\ell^\odd(G)\not\in\CSP(\mC_\ell)$.
\end{lemma}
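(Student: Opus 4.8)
The statement asserts that the even CFI structure maps homomorphically into $\mC_\ell$ while the odd one does not. The natural candidate homomorphism $h\colon A_\ell(G)\to\{0,1\}$ is the ``projection onto the second coordinate shifted by one'', namely $h(e,i):=i-1\pmod 2$, so that $h(e,1)=0$ and $h(e,2)=1$. The plan is to check directly that this map respects both relations of $\mA_\ell^\even(G)=\mA_\ell(G,\emptyset)$, and then to argue by a parity/counting obstruction that no homomorphism from $\mA_\ell^\odd(G)$ can exist.

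For the positive part, first I would unwind the definitions: a tuple of $R_0^{\mA_\ell(G,\emptyset)}$ is of the form $((e_1,i_1),\ldots,(e_\ell,i_\ell))$ with $(e_1,\ldots,e_\ell)=\vec e(v)$ for some $v\in V$ and $\sum_j i_j\equiv 0\pmod 2$ (since $U=\emptyset$, every $v$ contributes $R(v)$ to $R_0$). Applying $h$ componentwise gives $(i_1-1,\ldots,i_\ell-1)$, whose sum is $\sum_j i_j-\ell\equiv -\ell\equiv \ell\pmod 2$. Dually a tuple of $R_1^{\mA_\ell(G,\emptyset)}$ has $\sum_j i_j\equiv 1$, so its image has coordinate sum $\equiv 1-\ell\equiv \ell+1\pmod 2$. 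Thus, when $\ell$ is even, $h$ sends $R_0$-tuples into $R_0^{\mC_\ell}$ and $R_1$-tuples into $R_1^{\mC_\ell}$, and when $\ell$ is odd it sends $R_0$ into $R_1^{\mC_\ell}$ and $R_1$ into $R_0^{\mC_\ell}$; in the odd case one instead uses $h'(e,i):=i\pmod 2$ (i.e.\ $h'(e,1)=1$, $h'(e,2)=0$), or equivalently composes $h$ with the automorphism of $\mC_\ell$ swapping the two relations when $\ell$ is odd — either way a homomorphism exists, so $\mA_\ell^\even(G)\in\CSP(\mC_\ell)$. (A cleaner uniform phrasing: $\sum_j i_j$ has a fixed parity across the fibre, and since all the $e_j$ are distinct the constraints are exactly linear equations over $\Zp{2}$ in the ``which element of the pair'' variables, so a solution is any assignment with a fixed global pattern.)

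For the negative part, the key point is the standard CFI twist. Suppose toward a contradiction that $h\colon \mA_\ell^\odd(G)\to\mC_\ell$ is a homomorphism. For each edge $e\in E$ set $x_e:=h(e,1)$ and $y_e:=h(e,2)$; from the relations, for each vertex $v$ with $\vec e(v)=(e_1,\ldots,e_\ell)$ we get a family of linear constraints forcing, for every choice of bits $i_1,\ldots,i_\ell$ with the appropriate parity, that $\sum_j z_j^{(i_j)}$ (where $z_e^{(1)}=x_e$, $z_e^{(2)}=y_e$) equals the parity label of $v$; comparing two such tuples that differ in exactly two coordinates forces $x_e+y_e$ to be a constant $c_e$ independent of $v$, and then summing one constraint per vertex gives $\sum_{e\in E(v)} x_e \equiv b_v\pmod 2$ where $b_v\in\{0,1\}$ is $1$ exactly for $v\in U$. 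Summing these vertex equations over \emph{all} $v\in V$, each edge is counted exactly twice (once per endpoint) so the left side vanishes mod $2$, forcing $\sum_{v\in V} b_v\equiv 0$, i.e.\ $|U|$ even. But $\mA_\ell^\odd(G)=\mA_\ell(G,\{v_0\})$ has $|U|=1$, a contradiction; hence $\mA_\ell^\odd(G)\notin\CSP(\mC_\ell)$.

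\textbf{Main obstacle.} The routine positive direction is just bookkeeping with parities of $\ell$. The substantive step is the negative direction, and within it the slightly delicate bit is extracting, from the $\ell$-ary relations $R_0,\tilde R$ of the CFI gadget, the clean statement ``$x_e+y_e$ is a global constant $c_e$ and $\sum_{e\in E(v)} x_e\equiv b_v$'' — one must be careful that the relation at $v$ really does contain \emph{all} tuples of the stated parity (so that pairs differing in two coordinates are available to force $c_e$ well-defined), and that the ordering $\vec e(v)$ causes no index confusion when an edge $e=\{u,v\}$ sits in both $E(u)$ and $E(v)$. Once that reduction to a linear system over $\Zp{2}$ is in place, the global summation argument (each edge appearing in exactly two vertex equations) closes it immediately, and this is exactly the mechanism that makes $\CSP(\mC_\ell)$ detect the CFI parity — which is why it is the right problem to separate these structures.
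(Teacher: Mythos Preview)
Your proposal follows the same strategy as the paper: an explicit homomorphism for $\mA_\ell^\even(G)$, and a global parity count over $V$ to rule one out for $\mA_\ell^\odd(G)$. Two points sharpen it.

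First, your map $h'$ with $h'(e,1)=1$, $h'(e,2)=0$ already works uniformly for every $\ell$, since $h'((e,i))\equiv i\pmod 2$ gives $\sum_j h'((e_j,i_j))\equiv\sum_j i_j\pmod 2$; the case split on the parity of $\ell$ (and the talk of an ``automorphism swapping the relations'', which is not an automorphism when $\ell$ is odd) is unnecessary. The paper simply uses $h'$.

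Second, in the negative direction your claim ``$b_v=1$ exactly for $v\in U$'' read off via $x_e=g((e,1))$ is only correct when $\ell$ is even: the all-ones tuple $((e_1,1),\dots,(e_\ell,1))$ lies in $R(v)$ only when $\ell$ is even, and for $\ell$ odd it lies in $\tilde R(v)$, giving $b_v=[v\notin U]$ instead. The clean fix (and what the paper does) is to use the all-twos tuple $((e_1,2),\dots,(e_\ell,2))$, which always lies in $R(v)$ since $2\ell$ is even; this yields $n_v:=\sum_{e\in E(v)} g((e,2))\equiv[v\in U]\pmod 2$ for every $\ell$, and then $\sum_{v} n_v = 2\lvert\{e:g((e,2))=1\}\rvert$ is even while $\sum_v[v\in U]=1$, a contradiction. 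Your intermediate step comparing tuples differing in two coordinates to pin down $x_e+y_e$ is then not needed.
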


\begin{proof}
Let $h\colon A_\ell(G)\to\{0,1\}$ be the function such that $h((e,1))=1$ and $h((e,2))=0$ for all $e\in E$. Then for any tuple $((e_1,i_1),\ldots,(e_\ell,i_\ell))$ the parity of $\sum_{j\in [\ell]}h((e_j,i_j))$ is the same as the parity of $\sum_{j\in [\ell]}i_j$. Thus, $h$ is a homomorphism $\mA_\ell^\even(G)\to\mC_\ell$.

To show that $\mA_\ell^\odd(G)\not\in\CSP(\mC_\ell)$, assume towards contradiction that $g\colon A_\ell(G)\to\{0,1\}$ is a homomorphism $\mA_\ell^\odd(G)\to\mC_\ell$. Then for every $e\in E$ necessarily $g((e,1))\not=g((e,2))$. Furthermore, for every $v\in V\setminus\{v_0\}$, the number $n_v:=|\{e\in E(v)\mid g((e,2))=1\}|$ must be even, while the number $n_{v_0}$ must be odd. Thus, $\sum_{v\in V}n_v$ must be odd. However, this is impossible, since clearly $\sum_{v\in V}n_v=2|\{e\in E\mid g((e,2))=1\}|$.
\end{proof}

\subsection{Good bijections}

Our aim is to prove, for a suitable graph $G$, that Duplicator has a
winning strategy in 
$\PG^{\cN_\ell}_k(\mA_\ell^\even(G),\mA_\ell^\odd(G),\emptyset,\emptyset)$. For
the winning strategy, Duplicator needs a collection of well-behaved
bijections. We define such a collection $\gb$ in
Definition~\ref{good-bij} below.  One requirement is that the bijections preserve the first component of the elements $(e,i)\in A_\ell(G)$.
\begin{definition}
    A bijection $f\colon A_\ell(G)\to A_\ell(G)$ is \emph{edge preserving} if for every $e\in E$ and 
$i\in [2]$, $f((e,i))$ is either $(e,1)$ or $(e,2)$. 
\end{definition}

For any edge preserving $f$ and any $v\in V$
we denote by $f_v$ the restriction of $f$ to the set $E(v)\times [2]$. The \emph{switching number} $\swn(f_v)$ of $f_v$ is $|\{e\in E(v)\mid f_v((e,1))=(e,2)\}|$.
The lemma below follows directly from the definitions of $\mA_\ell(v)$ and $\tilde{\mA}_\ell(v)$.

\begin{lemma}\label{swn-lemma}
Let $f\colon A_\ell(G)\to A_\ell(G)$ be an edge preserving bijection, and let $v\in V$.

(a) If $\swn(f_v)$ is even, then $f_v$ is an automorphism of the structures 
$\mA_\ell(v)$ and $\tilde{\mA}_\ell(v)$.

(b) If $\swn(f_v)$ is odd, then $f_v$ is an isomorphism between the structures
$\mA_\ell(v)$ and $\tilde{\mA}_\ell(v)$.
\end{lemma}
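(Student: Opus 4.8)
The statement to prove is purely structural: for an edge preserving bijection $f$ and a vertex $v\in V$, the restriction $f_v\colon A(v)\to A(v)$ is either an automorphism of both $\mA_\ell(v)$ and $\tilde{\mA}_\ell(v)$, or an isomorphism swapping them, according to the parity of $\swn(f_v)$. First I would record that, since $f$ is edge preserving, $f_v$ fixes the first component of every element of $A(v)=E(v)\times[2]$ and permutes the second component within each fibre $\{(e,1),(e,2)\}$; hence there is a subset $S\subseteq E(v)$ (the ``switched'' edges) such that $f_v((e,i))=(e,i)$ for $e\notin S$ and $f_v((e,i))=(e,3-i)$ for $e\in S$, and by definition $\swn(f_v)=|S|$. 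Writing $\vec e(v)=(e_1,\ldots,e_\ell)$ for the $\le^G$-ordered listing of $E(v)$, the action of $f_v$ on a tuple $((e_1,i_1),\ldots,(e_\ell,i_\ell))\in A(v)^\ell$ replaces it by $((e_1,i_1'),\ldots,(e_\ell,i_\ell'))$ where $i_j'=i_j$ if $e_j\notin S$ and $i_j'=3-i_j$ otherwise.

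The key computation is then the effect of $f_v$ on the parity invariant $\sum_{j\in[\ell]} i_j \bmod 2$ that defines $R(v)$ and $\tilde R(v)$. Since $3-i_j\equiv 1-i_j \equiv i_j+1 \pmod 2$, flipping the second component of exactly the coordinates in $S$ changes the sum $\sum_j i_j$ by exactly $|S|=\swn(f_v)$ modulo $2$. Therefore $\sum_j i_j'\equiv \sum_j i_j + \swn(f_v)\pmod 2$. If $\swn(f_v)$ is even, the parity is preserved, so $f_v$ maps $R(v)$ onto $R(v)$ and $\tilde R(v)$ onto $\tilde R(v)$ (bijectively, being a bijection whose inverse has the same switching set $S$), giving part~(a); if $\swn(f_v)$ is odd, the parity is flipped, so $f_v$ maps $R(v)$ onto $\tilde R(v)$ and $\tilde R(v)$ onto $R(v)$, which is exactly the statement that $f_v$ is an isomorphism $\mA_\ell(v)\to\tilde{\mA}_\ell(v)$, giving part~(b). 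Here I use that every tuple in $R(v)\cup\tilde R(v)$ has first-component sequence exactly $\vec e(v)$, so $f_v$ cannot move a tuple out of $A(v)^\ell$ into some other vertex's relation, and the two relations $R(v),\tilde R(v)$ partition the set of all tuples with first components $\vec e(v)$.

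This lemma has no real obstacle — it is a one-line parity count once the notation is unpacked — so the only thing to be careful about is bookkeeping: making sure the ``onto'' direction is justified (use that $f_v^{-1}=f_v$ when $S$ is an involution set, or more simply that $f_v$ is a bijection of the finite set $A(v)^\ell$ preserving the partition into tuples with first components $\vec e(v)$, hence restricts to a bijection on that part and then permutes the two parity classes according to $\swn(f_v)$), and noting explicitly that $\mA_\ell(v)$ and $\tilde{\mA}_\ell(v)$ have the same universe $A(v)$ so that ``$f_v$ is an automorphism/isomorphism'' is meaningful. I would present the argument in three short sentences rather than a displayed computation.
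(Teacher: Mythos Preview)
Your proposal is correct and is precisely the intended argument: the paper does not give a proof of this lemma beyond the remark that it ``follows directly from the definitions of $\mA_\ell(v)$ and $\tilde{\mA}_\ell(v)$,'' and your parity computation is exactly the unpacking of that remark. There is nothing to add.
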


Given an edge preserving bijection $f\colon A_\ell(G)\to A_\ell(G)$ we denote by 
$\Odd(f)$ the set of all $v\in V$ such that $\swn(f_v)$ is odd. Observe that 
$|\Odd(f)|$ is necessarily even.

\begin{corollary}
An edge preserving bijection $f\colon A_\ell(G)\to A_\ell(G)$ is an automorphism of the 
structures $\mA_\ell^\even(G)$ and $\mA_\ell^\odd(G)$ if and only if $\Odd(f)=\emptyset$.
\end{corollary}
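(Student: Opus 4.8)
The plan is to localise the condition ``$f$ is an automorphism'' to the individual vertex gadgets, exploiting that $f$ is edge preserving. The crucial observation is that the relations $R_0$ and $R_1$ of both $\mA_\ell^\even(G)$ and $\mA_\ell^\odd(G)$ are unions, over $v\in V$, of the gadget relations $R(v)$ and $\tilde{R}(v)$, and that a tuple in any such gadget is recognisable from its first components: every tuple of $R(v)\cup\tilde{R}(v)$ has first-component sequence $\vec e(v)$, and since $G$ is a simple $\ell$-regular graph with $\ell\ge 2$, distinct vertices have distinct sets of incident edges, so the tuples $\vec e(v)$ are pairwise distinct. Writing $T_v$ for the set of all $\ell$-tuples over $A_\ell(G)$ whose first-component sequence is $\vec e(v)$, an edge preserving $f$ maps $T_v$ bijectively onto itself and acts on it exactly as $f_v$ does. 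Since $R_0,R_1\subseteq\bigcup_{v\in V}T_v$ and the sets $T_v$ are pairwise disjoint, $f$ preserves $R_0$ (respectively $R_1$) if and only if $f_v$ preserves $R_0\cap T_v$ (resp.\ $R_1\cap T_v$) for every $v$.

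Next I would compute $R_0\cap T_v$ and $R_1\cap T_v$ explicitly: for $\mA_\ell^\even(G)$ they equal $R(v)$ and $\tilde{R}(v)$ respectively, for every $v$; for $\mA_\ell^\odd(G)$ the same holds for $v\ne v_0$, while at $v_0$ the two are interchanged. Hence $f$ is an automorphism of $\mA_\ell^\even(G)$ (resp.\ of $\mA_\ell^\odd(G)$) if and only if, for every $v$, the restriction $f_v$ maps $R(v)$ to itself and $\tilde{R}(v)$ to itself, i.e.\ $f_v$ is an automorphism of $\mA_\ell(v)$; at $v_0$ in the odd case the interchanged roles instead require $f_{v_0}(\tilde{R}(v_0))=\tilde{R}(v_0)$, but this is the same condition, because by Lemma~\ref{swn-lemma} $f_{v_0}$ either fixes both of $R(v_0),\tilde{R}(v_0)$ or swaps them. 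Granting this reduction, the corollary is immediate from Lemma~\ref{swn-lemma}. If $\Odd(f)=\emptyset$, then every $\swn(f_v)$ is even, so part~(a) makes each $f_v$ an automorphism of $\mA_\ell(v)$ and of $\tilde{\mA}_\ell(v)$; hence $f$ preserves $R_0$ and $R_1$ in both structures. Conversely, if some $v\in\Odd(f)$, then $\swn(f_v)$ is odd, so part~(b) gives $f_v(R(v))=\tilde{R}(v)\ne R(v)$, the two being disjoint and nonempty (for instance the tuple with all second components equal to $2$ lies in $R(v)$). Applying $f$ to such a tuple, which lies in $R_0^{\mA_\ell^\even(G)}$, yields a tuple of $\tilde{R}(v)$, hence one not in $R_0^{\mA_\ell^\even(G)}$, contradicting that $f$ is an automorphism; for $\mA_\ell^\odd(G)$ one argues identically, using a tuple of $R(v)$ when $v\ne v_0$ and a tuple of $\tilde{R}(v_0)$ when $v=v_0$.

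I do not anticipate a genuine obstacle here; the statement is essentially bookkeeping layered on Lemma~\ref{swn-lemma}. The single point deserving a line of care is the ``localisation'' step — that the profiles $\vec e(v)$ are pairwise distinct, which is what makes each $R_0\cap T_v$, $R_1\cap T_v$ precisely one of $R(v),\tilde{R}(v)$ and lets the global identities $f(R_0)=R_0$, $f(R_1)=R_1$ split into independent per-vertex conditions. This holds because $G$ is simple and $\ell$-regular with $\ell\ge 2$: two distinct vertices can share at most the single edge joining them, so their incident-edge sets, and with them the tuples $\vec e(v)$, differ.
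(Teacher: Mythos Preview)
Your proposal is correct and follows essentially the same approach as the paper: both directions reduce to Lemma~\ref{swn-lemma}, using that $f_v$ is an automorphism of $\mA_\ell(v)$ when $\swn(f_v)$ is even and swaps $R(v)$ with $\tilde R(v)$ when it is odd, together with $R(v)\cap\tilde R(v)=\emptyset$. Your localisation argument (that the first-component profiles $\vec e(v)$ are pairwise distinct, so a tuple landing in $\tilde R(v)$ cannot lie in any $R(w)$) makes explicit a step the paper leaves implicit; otherwise the arguments coincide.
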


\begin{proof}
If $\Odd(f)=\emptyset$, then by Lemma~\ref{swn-lemma}(a) $f_v$ is an automorphism of $\mA_\ell(v)$ and $\tilde{\mA}_\ell(v)$ for all $v\in V$. Clearly this means that $f$ is an automorphism of $\mA_\ell^\even(G)$ and $\mA_\ell^\odd(G)$. On the other hand, if $v\in\Odd(f)$, then by Lemma~\ref{swn-lemma}(b), for any tuple $\vec a\in A(v)^\ell$, we have $\vec a\in R(v)\iff f(\vec a)\in\tilde{R}(v)$. Since $R(v)\cap\tilde{R}(v)=\emptyset$, it follows that $f$ is not an automorphism of $\mA_\ell^\even(G)$ and $\mA_\ell^\odd(G)$.
\end{proof}

\begin{definition}\label{good-bij}
Let $f\colon A_\ell(G)\to A_\ell(G)$ be edge preserving bijection. 
Then $f$ is a \emph{good bijection} if either
$\Odd(f)=\emptyset$ or $\Odd(f)=\{v_0,v\}$ for some $v\in V\setminus\{v_0\}$. 
We denote the set of all good bijections by $\gb$.
\end{definition}

Note that if $f\colon A_\ell(G)\to A_\ell(G)$ is a good bijection, then there is exactly one
vertex $v^*\in V$ such that $f_{v^*}$ is not a partial isomorphism 
$\mA_\ell^\even(G)\to\mA_\ell^\odd(G)$. In case $\Odd(f)=\emptyset$, $v^*=v_0$, while
in case $\Odd(f)=\{v_0,v\}$ for some $v\not=v_0$, $v^*=v$. We denote this vertex
$v^*$ by $\tw(f)$ (the \emph{twist} of $f$).

Assume now that Duplicator has played a good bijection $f$ in the game $\PG^{\cN_\ell}_k$ on the structures $\mA_\ell^\even(G)$ and $\mA_\ell^\odd(G)$. Then it is sure that Spoiler does not win the game in the next position $(\alpha,\beta)$ if $(e,1)$ and $(e,2)$ are not in the range of $\alpha$ (and $\beta$) for any $e\in E(\tw(f))$. This leads us to the following notion.

\begin{definition}\label{gbF}
Let $f$ be a \emph{good} bijection, and let $F\subseteq E$. Then $f$ if \emph{good
for $F$} if $E(\tw(f))\cap F=\emptyset$.
We denote the set of all bijections that are good for $F$ by $\gb(F)$.
\end{definition}

\begin{lemma}\label{gb-lemma}
If $f\in\gb(F)$, then $f\restriction (F\times [2])$ is a partial isomorphism 
$\mA_\ell^\even(G)\to\mA_\ell^\odd(G)$.
\end{lemma}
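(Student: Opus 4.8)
The statement to prove is Lemma~\ref{gb-lemma}: if $f \in \gb(F)$, then $f \restriction (F \times [2])$ is a partial isomorphism $\mA_\ell^\even(G) \to \mA_\ell^\odd(G)$. The plan is to unpack what a partial isomorphism requires here and then reduce it to the structural lemmas already established, chiefly Lemma~\ref{swn-lemma} and the analysis of $\tw(f)$.

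First I would recall that $f$ is edge preserving, so $f \restriction (F \times [2])$ is automatically a bijection between $F \times [2]$ and itself (it permutes the two elements sitting over each edge $e \in F$), hence it is injective and its image is $F \times [2]$. So the only thing to verify is that it preserves and reflects membership in the two relations $R_0$ and $R_1$. Since the relations $R_0^{\mA_\ell(G,U)}$ and $R_1^{\mA_\ell(G,U)}$ are each a union, over vertices $v$, of the local relations $R(v)$ or $\tilde R(v)$ supported on $A(v) = E(v) \times [2]$, a tuple $\vec a$ lies in $R_0$ or $R_1$ exactly when all its edge-components are the edges $\vec e(v)$ of a single vertex $v$ (in the correct order) and the parity condition holds; moreover this $v$ is determined by $\vec a$. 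So it suffices to show: for every $v \in V$ all of whose incident edges lie in $F$ (equivalently, $E(v) \subseteq F$), the restriction $f_v$ is a partial isomorphism $\mA_\ell^\even(G) \to \mA_\ell^\odd(G)$, i.e.\ it maps $R(v)$-tuples of $\mA_\ell^\even(G)$ to the corresponding $\ell$-tuples in $\mA_\ell^\odd(G)$ correctly, and likewise for $\tilde R(v)$.

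The key point is the hypothesis $f \in \gb(F)$, which by Definition~\ref{gbF} means $E(\tw(f)) \cap F = \emptyset$. Hence if $E(v) \subseteq F$ then $v \neq \tw(f)$; and as noted right after Definition~\ref{good-bij}, $\tw(f)$ is the \emph{unique} vertex $v^*$ for which $f_{v^*}$ fails to be a partial isomorphism $\mA_\ell^\even(G) \to \mA_\ell^\odd(G)$. Concretely, for $v \neq \tw(f)$ we are in one of two cases: either $\Odd(f) = \emptyset$ and $v \neq v_0$, so $\swn(f_v)$ is even and by Lemma~\ref{swn-lemma}(a) $f_v$ is an automorphism of $\mA_\ell(v)$ and of $\tilde\mA_\ell(v)$; recalling that in $\mA_\ell^\even(G) = \mA_\ell(G,\emptyset)$ vertex $v$ contributes $R(v)$ to $R_0$ and $\tilde R(v)$ to $R_1$, and that in $\mA_\ell^\odd(G) = \mA_\ell(G,\{v_0\})$ the same $v \neq v_0$ still contributes $R(v)$ to $R_0$ and $\tilde R(v)$ to $R_1$, $f_v$ preserves both relations as required. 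Or $\Odd(f) = \{v_0, v^\dagger\}$ with $v^* = \tw(f) = v^\dagger$, and for $v \neq v^\dagger$ with $v \neq v_0$ we again have $\swn(f_v)$ even and conclude as before; for $v = v_0$ (which can only be relevant when $v_0 \neq \tw(f)$, i.e.\ in this second case), $\swn(f_{v_0})$ is odd, so by Lemma~\ref{swn-lemma}(b) $f_{v_0}$ is an isomorphism $\mA_\ell(v_0) \to \tilde\mA_\ell(v_0)$ — and since $v_0$ contributes $R(v_0)$ to $R_0^{\mA_\ell^\even(G)}$ but $\tilde R(v_0)$ to $R_0^{\mA_\ell^\odd(G)}$ (the flip built into $U = \{v_0\}$), this is exactly what is needed for $f_{v_0}$ to be a partial isomorphism $\mA_\ell^\even(G) \to \mA_\ell^\odd(G)$.

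Assembling these: given any tuple $\vec a \in (F \times [2])^\ell$ that lies in $R_0^{\mA_\ell^\even(G)}$, its edge components identify a unique vertex $v$ with $E(v) \subseteq F$, hence $v \neq \tw(f)$, and the appropriate clause above shows $f(\vec a) = f_v(\vec a)$ lies in $R_0^{\mA_\ell^\odd(G)}$; the converse direction (reflecting membership) follows because $f_v$ is a bijection and the same argument applies to $f^{-1}$, or simply because in each case $f_v$ is an iso/automorphism between the relevant local structures in \emph{both} directions. The same reasoning handles $R_1$. I do not anticipate a genuine obstacle here: the content is entirely in correctly bookkeeping which local relation each vertex contributes to $R_0$ versus $R_1$ in $\mA_\ell^\even(G)$ vs.\ $\mA_\ell^\odd(G)$, and then invoking Lemma~\ref{swn-lemma}; the hypothesis $E(\tw(f)) \cap F = \emptyset$ is exactly the ingredient that rules out the one bad vertex. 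The mildly delicate step is making the "a tuple in $R_0 \cup R_1$ over edges in $F$ is localized at a single fully-$F$-incident vertex" observation precise, but this is immediate from the definition of $R(v), \tilde R(v)$ as requiring the edge-tuple to equal $\vec e(v)$.
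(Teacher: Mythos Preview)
Your proposal is correct and follows essentially the same route as the paper's proof: both arguments observe that $f\restriction(F\times[2])$ is contained in the union of the local bijections $f_v$ for $v\neq\tw(f)$, and then invoke Lemma~\ref{swn-lemma} to verify that each such $f_v$ is the right kind of local (auto/iso)morphism between $\mA_\ell(v)$ and $\tilde\mA_\ell(v)$ depending on whether $v=v_0$ or not. Your write-up is simply more explicit about the localization step and the case analysis than the paper's two-sentence proof.
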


\begin{proof}
Clearly $f\restriction (F\times [2])\subseteq \bigcup_{v\in V\setminus\{\tw(f)\}}f_v$. By Lemma~\ref{swn-lemma}, $f_v$ is an automorphism of $\mA_\ell(v)$ for any $v\in V\setminus\{\tw(f),v_0\}$, and if $v_0\not=\tw(f)$, $f_{v_0}$ is an isomorphism $\mA_\ell(v)\to\tilde{\mA}_\ell(v)$. The claim follows from this.
\end{proof}

Given a good bijection $f$ with $\tw(f)=u$ and an $E$-path $P= (u_0,\ldots,u_m)$ 
from $u=u_0$ to $u'=u_m$, we obtain a new edge preserving bijection $f_P$ by switching $f$
on the edges $e_i:=\{u_i,u_{i+1}\}$, $i<m$, of~$P$: $f_P((e_i,j))=(e_i,3-j)$ for $i<m$,
and $f_P(c)=f(c)$ for other $c\in A_\ell(G)$. Clearly $f_P$ is also a good bijection, and
$\tw(f_P)=u'$.

\subsection{Cops and Robber game}

In order to prove that Duplicator has a winning strategy in 
$\PG^{\cN_\ell}_k(\mA_\ell^\even(G),\mA_\ell^\odd(G),\emptyset,\emptyset)$ we need to assume that
the graph $G$ has a certain largeness property with respect the number~$k$. We formulate 
this largeness property in terms of a game, $\CR^\ell_k(G)$, that is a new variation of the 
\emph{Cops\&Robber games} used for similar purposes in \cite{Hella96} and \cite{Hella23}.

\begin{definition}\label{CRgame}
The game $\CR^\ell_k(G)$ is played between two players, Cop and Robber. The positions of 
the game are pairs $(F,u)$, where $F\subseteq E$, $|F|\le k$, and $u\in V$.  
The rules of the game are the following:
\begin{itemize}
\item Assume that the position is $(F,u)$.
\item If $E(u)\cap F\not=\emptyset$, the game ends and Cop wins.
\item Otherwise Cop chooses a set $F'\subseteq E$ such that $|F'|\le k$. Then Robber answers
by giving mutually disjoint $E\setminus (F\cap F')$-paths $P_i=(u,u_1^i,\ldots,u_{n_i}^i)$, 
$i\in [\ell]$, from $u$ to vertices $u_i:=u_{n_i}^i$; here mutual disjointness means that 
$P_i$ and $P_{i'}$ do not share edges for $i\not=i'$ (i.e., $u^i_1\not=u^{i'}_1$ and
$\{u_j^i,u_{j+1}^i\}\not=\{u_{j'}^i,u_{j'+1}^i\}$ for all $j$ and $j'$). 
Then Cop completes the round by choosing $i\in [\ell]$. The next position is $(F',u_i)$.
\end{itemize}
\end{definition}

The intuition of the game $\CR^\ell_k(G)$ is that Cop has $k$ pebbles that he plays on 
edges of $G$ forming a set $F\subseteq E$; these pebbles mark the edges $e$ such that $(e,1)$ or $(e,2)$ is in the range of $\alpha$ or $\beta$ in a position $(\alpha,\beta)$ of the game $\PG^{\cN_\ell}_k$ on $\mA_\ell^\even(G)$ and $\mA_\ell^\odd(G)$.
Robber has one pebble that she plays on the vertices of $G$; this pebble marks the vertex $\tw(f)$, where $f$ is the good bijection played by Duplicator in the previous round of $\PG^{\cN_\ell}_k$. 

Cop captures Robber and wins the game if after some round (at least) one of his pebbles is on
an edge that is adjacent to the vertex containing Robber's pebble. This corresponds to a position $(\alpha,\beta)$ in the game $\PG^{\cN_\ell}_k$ such that $\alpha\mapsto\beta$ is potentially not a partial isomorphism. Otherwise Lemma~\ref{gb-lemma} guarantees that $\alpha\mapsto\beta$ is a partial isomorphism. Cop can then move any number of his pebbles to new positions on $G$. While the pebbles
Cop decides to move are still on their way to their new positions, Robber is allowed to prepare $\ell$ mutually disjoint escape routes along edges of $G$ that do not contain any stationary pebbles of Cop. We show in the proof of Theorem~\ref{winning-game} that these escape routes generate tuples $\vec a_1,\ldots,\vec a_\ell$ such that $f(\vec b)=\hat{q}(\vec a_1,\ldots,\vec a_\ell)$, where $q=n^\ell_{A_\ell(G)}$ and $\vec b$ is the tuple chosen by Spoiler after Duplicator played $f$. This gives Duplicator a legal answer $P=\{a_1,\ldots,\vec a_\ell\}$ to $\vec b$. Then Spoiler completes the round by choosing one of the tuples in $P$.
Correspondingly, in the end of the round of $\CR^\ell_k(G)$ Cop chooses which escape route Robber has to use by blocking all but one of them. 

\begin{definition}
Assume that $u\in V$ and $F\subseteq E$ is a set of edges such that $|F|\le k$.
We say that $u$ is \emph{$k$-safe for $F$} if Robber has a winning strategy in
the game $\CR^\ell_k(G)$ starting from position $(F,u)$.
\end{definition}

We prove next the existence of graphs $G$ such that Robber has a winning strategy in the game $\CR^\ell_k(G)$.

\begin{theorem}\label{G-existence}
  For every $\ell \ge 3$ and every $k \geq 1$, there is an $\ell$-regular graph $G = (V,E)$ such that every vertex $v\in V$ is $k$-safe for $\emptyset$.
\end{theorem}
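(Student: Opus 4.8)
The plan is to construct $G$ as a high-girth $\ell$-regular graph and show that sufficiently large girth forces every vertex to be $k$-safe for $\emptyset$. The intuition is standard in CFI-type arguments: if the graph locally looks like a tree around any small set of edges, Robber always has enough room to run away along $\ell$ disjoint paths that avoid Cop's pebbles. Concretely, I would first invoke a known existence result (e.g.\ via probabilistic arguments, or explicit constructions such as Ramanujan graphs or the constructions used in \cite{CFI92}) to obtain, for each $\ell\ge 3$ and each $k\ge 1$, a connected $\ell$-regular graph $G=(V,E)$ whose girth exceeds some bound $g(k)$ — taking $g(k) := 2k+3$ or so should be comfortably sufficient. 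Since the paper's definition of $\CR^\ell_k(G)$ requires an ordered graph, I would fix an arbitrary linear order $\le^G$ on $V$ afterwards; the order plays no role in the Cops\&Robber game itself.

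**Next I would describe Robber's strategy** and prove it is winning from any position $(F,u)$ with $|F|\le k$ and $E(u)\cap F=\emptyset$. The key invariant to maintain is that the current Robber vertex $u$ is \emph{far} from the pebbled set $F$ in the graph metric — specifically, that the distance from $u$ to every edge in $F$ is at least $g(k) - k > k+1$ (or some similar bound decreasing slowly with the number of rounds played). When Cop announces a new pebble set $F'$ with $|F'|\le k$, Robber must produce $\ell$ mutually edge-disjoint paths from $u$, using only edges outside $F\cap F'$, leading to vertices $u_1,\ldots,u_\ell$. Because $G$ is $\ell$-regular and has girth $> 2k+3$, the ball of radius $k+1$ around $u$ is a tree in which $u$ has exactly $\ell$ subtrees hanging off its $\ell$ incident edges, and these subtrees are pairwise edge-disjoint. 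Since $|F\cap F'|\le k$ and there are $\ell$ such subtrees, at most $k$ of them contain a forbidden edge; but even a subtree containing forbidden edges still offers a path avoiding them as long as we are willing to route through it — the real point is that inside a tree of depth $k+1$ rooted at a child of $u$, since $|F'|\le k$ pebbles total, Robber can push each of the $\ell$ escape paths deep enough (into the first $k+1$ levels) that each escape endpoint $u_i$ ends at distance $\ge g(k)-k$ from $F'$. After Cop picks one branch $i$, the new position is $(F', u_i)$ and the invariant is restored (with the bound weakened by a controlled amount per round). Since Cop never captures Robber while the invariant holds — $E(u)\cap F=\emptyset$ is implied by $u$ being at distance $>1$ from every edge of $F$ — Robber survives forever, so Cop never wins; hence by Definition the vertex is $k$-safe.

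**The main obstacle** I anticipate is the bookkeeping around the "$E\setminus(F\cap F')$-paths": Robber must avoid only the edges \emph{retained} by Cop (those in $F\cap F'$), not all of $F'$, which actually helps Robber, but it means the invariant must be phrased carefully so that after the move we can guarantee disjointness of the $\ell$ routes \emph{and} that each $u_i$ is safe with respect to the \emph{new} set $F'$. The cleanest way to handle this is to demand that each escape route travels a full $k+2$ steps into one of the $\ell$ tree-branches at $u$ — there are $\ell$ branches, they are edge-disjoint by the girth hypothesis, and within each branch of depth $k+2$ one can always find (by a pigeonhole/counting argument on the $\le k$ pebbles) a leaf at the bottom level that is at graph-distance $\ge k+2$ from all of $F'$, and choosing paths that only descend keeps them disjoint from $F\cap F'$ provided no retained pebble blocks the very first edge out of $u$ — which it cannot, since that edge is at distance $1$ from $u$ and the invariant guarantees $F$ is far from $u$. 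Managing how much "safety margin" is consumed each round (and checking the game cannot last long enough to exhaust a margin that starts at roughly $g(k)$) is the delicate quantitative piece; everything else is routine once the girth bound is fixed large enough relative to $k$.

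**Finally I would remark** that the hypothesis $\ell\ge 3$ is used to guarantee that the locally-tree-like neighborhood of $u$ branches into at least $\ell$ edge-disjoint directions — for $\ell = 2$ the graph is a cycle and Robber has no room — and that connectedness of $G$ is automatic for the explicit constructions, or can be arranged by taking a connected component (still $\ell$-regular, still large girth). This yields the theorem.
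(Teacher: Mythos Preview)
Your overall architecture---high-girth $\ell$-regular graph, locally-tree argument, $\ell$ edge-disjoint escape routes---matches the paper's. But there is a genuine gap in the quantitative part, and it is exactly the piece you flag as ``delicate''.

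The game $\CR^\ell_k(G)$ has no a priori bound on its length: Robber wins only by surviving \emph{every} round, and Cop may play forever. Your proposed invariant is either ``distance from $u$ to $F$ is at least $g(k)-k$'' or ``some similar bound decreasing slowly with the number of rounds''; you then plan to check ``the game cannot last long enough to exhaust a margin''. That check fails: there is nothing to bound the number of rounds, so a margin that shrinks---however slowly---is eventually exhausted. You must maintain a \emph{fixed} distance invariant, regenerated completely at every round.

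The paper achieves this as follows. Fix $d$ with $(\ell-1)^{d-1}>2k$ and take girth $g>6d$. The invariant is simply: every endpoint of an edge in $F$ is at distance $>d$ from $u$. Given a new $F'$, look at the radius-$3d$ ball around $u$, which is a tree. At depth exactly $d$ below each child $u_i$ of $u$ there are $(\ell-1)^{d-1}>2k$ vertices; since $F'$ has at most $2k$ endpoints, some such vertex $x_i$ has a subtree (of height $2d$) disjoint from all of them. Robber's $i$-th path goes from $u$ down to $x_i$ (these $d$ steps avoid $F$ by the invariant) and then $d$ further steps to some $y_i$ in that subtree (these avoid $F'$ by choice of $x_i$); hence the whole path avoids $F\cap F'$. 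Crucially, the ball of radius $d$ around $y_i$ lies entirely inside the subtree rooted at $x_i$ (this is where girth $>6d$, not merely $>4d$, is used), so $y_i$ is again at distance $>d$ from every endpoint of $F'$, and the invariant is restored at full strength.

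Two concrete corrections to your sketch: the girth bound $2k+3$ is far too small (you need roughly $6\log_{\ell-1}(2k)$), and your pigeonhole step---``within each branch of depth $k+2$ one can find a leaf at distance $\ge k+2$ from all of $F'$''---does not follow without the intermediate-layer counting, since an edge of $F'$ placed just outside the local tree can be close to many leaves at once.
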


\begin{proof}
Clearly if Robber has a winning strategy in $\CR^\ell_k(G)$, it also has a winning strategy in $\CR^\ell_{k'}(G)$ for $k' < k$.  Thus, it suffices to prove the theorem for  $k \ge \ell$.
  
  By a well-known result of Erd\"os and Sachs~\cite{ErdosS63}, there exist $\ell$-regular connected graphs of \emph{girth} $g$ for arbitrarily large $g$.  Choose a positive integer $d$ with $d > \frac{\log 2k}{\log (\ell -1)} +1$ and let $G$ be an $\ell$-regular graph of girth $g > 6d$.  We claim that any vertex $v$ in $G$ is $k$-safe for $\emptyset$.

  To prove this, we show inductively that Robber can maintain the following invariant in any position $(F,u)$ reached during the game:
\begin{description}  
\item[$(*)$]   for each edge $e \in F$, neither end point of $e$ is within distance $d$ of $u$ in $G$.      
\end{description} 
  Note that, from the assumption that $k  \geq \ell$ and $d > \frac{\log 2k}{\log (\ell -1)}  +1$, it follows that $d \ge 2$.  Thus, the invariant $(*)$ guarantees, in particular, that Cop does not win at any point.

  Clearly the invariant $(*)$ is satisfied at the initial position, since $F$ is empty.  Suppose now that it is satisfied in some position $(F,u)$ and Cop chooses a set $F'$ in the next move.  Let $C \subseteq V$ denote the set of end points of all edges in $F'$.  Since $ |F'| \leq k$, we have $|C| \leq 2k$.

  Let $N \subseteq V$ denote the collection of vertices which are at distance at most $3d$ from $u$.  By the assumption on the girth of $G$, the induced subgraph $G[N]$ is a tree.  We can consider it as a rooted tree, with root $u$.  Then, $u$ has exactly $\ell$ children.  All vertices in $N$ at distance less than $3d$ from $u$ have exactly $\ell -1$ children (and one parent), and all vertices at distance exactly $3d$ from $u$ are leaves of the tree.  This allows us to speak, for instance, of the subtree rooted at a vertex $u'$ meaning the subgraph of $G$ induced by the vertices $x$ in $N$ such that the unique path from $u$ to $x$ in $G[N]$ goes through $u'$.

Let $u_1,\ldots,u_{\ell}$ be the children of $u$.  For each $i$, let $U_i$ denote the set of descendants of $u_i$ that are at distance exactly $d$ from $u$ (and so at distance $d-1$ from $u_i$).  Note that the collection $U_1,\ldots,U_{\ell}$ forms a partition of the set of vertices in $N$ that are at distance exactly $d$ from $u$.  Each $x \in U_i$ is the root of a tree of height $2d$.  Moreover, since the tree below $u_i$ is $(\ell -1)$-regular, $U_i$ contains exactly $(\ell -1)^{d-1}$ vertices.  By the assumption that $d > \frac{\log 2k}{\log (\ell -1)} +1$, it follows that $(\ell -1)^{d-1} > 2k \ge |C|$ and therefore each $U_i$ contains at least one vertex $x_i$ such that the subtree rooted at $x_i$ contains no vertex in $C$.  Let $y_i$ be any descendant of $x_i$ at distance $d$ from $x_i$ and let $P_i$ denote the unique path in $G[N]$ from $u$ to $y_i$.  Robber's move is to play the paths $P_1,\ldots,P_{\ell}$.  We  now verify that this is a valid move, and that it maintains the required invariant $(*)$.

First, note that the paths  $P_1,\ldots,P_{\ell}$ are paths in the tree $G[N]$ all starting at $u$ and the second vertex in path $P_i$ is $u_i$.  It follows that the paths are pairwise edge disjoint.  We next argue that no path $P_i$ goes through an edge in $F \cap F'$. Indeed, by the inductive assumption, no endpoint of an edge in $F$ appears within distance $d$ of $u$ and therefore the path from $u$ to $x_i$ does not go through any such vertex.  Moreover, by the choice of $x_i$, no endpoint of an edge in $F'$ appears in the subtree rooted at $x_i$ and therefore the path from $x_i$ to $y_i$ does not go through any such vertex.  Together these ensure that the path $P_i$ does not visit any vertex that is an endpoint of an edge in $F \cap F'$.

Finally, to see that the invariant $(*)$ is maintained, note that all vertices that are at distance at most $d$ from $y_i$ are in the subtree of $G[N]$ rooted at  $x_i$.  The choice of $x_i$ means this contains no vertex in $C$.  This is exactly the condition that we wished to maintain.
\end{proof}

\subsection{Winning the game}

We are now ready to prove that a winning strategy for Robber in
$\CR^\ell_k(G)$ generates a winning strategy for Duplicator in  the game $\PG^{\cN_\ell}_k$ on the structures $\mA_\ell^\even(G)$ and $\mA_\ell^\odd(G)$.

\begin{theorem}\label{winning-game}
Let $G$ be a connected $\ell$-regular ordered graph.
If $v_0$ is $k$-safe for the empty set, then Duplicator has a winning strategy in the game 
$\PG^{\cN_\ell}_k(\mA_\ell^\even(G),\mA_\ell^\odd(G),\emptyset,\emptyset)$.
\end{theorem}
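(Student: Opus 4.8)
The plan is to describe explicitly a winning strategy for Duplicator in $\PG^{\cN_\ell}_k(\mAe_\ell(G),\mAo_\ell(G),\emptyset,\emptyset)$ that is driven by a winning strategy for Robber in $\CR^\ell_k(G)$ from position $(\emptyset,v_0)$. Throughout the play, Duplicator maintains a \emph{good} bijection $f$ together with a position $(F,u)$ of the Cops\&Robber game, subject to the invariant that $u=\tw(f)$, that $F$ is exactly the set of edges $e$ for which some $(e,i)$ lies in the range of the current assignment $\alpha$ (equivalently $\beta$), that $f\in\gb(F)$, and that $(F,u)$ is a position from which Robber's strategy wins. At the start $F=\emptyset$, $f$ is any edge-preserving bijection with $\tw(f)=v_0$ (e.g.\ $f_v=\mathrm{id}$ for $v\ne v_0$ and a single switch at $v_0$; this is good), and the invariant holds by hypothesis. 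Lemma~\ref{gb-lemma} guarantees that whenever the invariant holds, $\alpha\mapsto\beta$ is a partial isomorphism, so rule~(1) never fires against Duplicator.

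The heart of the argument is a single round. Suppose the position is $(\alpha,\beta)$ with bijection $f$ and CR-position $(F,u)$, and Spoiler plays a right $\cN_\ell$-quantifier move choosing $r\le k$ and a tuple $\vec y$ of distinct variables (the left move is symmetric). Let $F'\subseteq E$ be the set of edges that will be pebbled after $\vec y$ is (re)assigned — concretely, $F'$ consists of the edges incident to the positions that remain in range plus the new ones; since only $k$ variables are available, $|F'|\le k$. Feed $F'$ to Robber's strategy as Cop's move; Robber returns $\ell$ mutually disjoint $E\setminus(F\cap F')$-paths $P_1,\dots,P_\ell$ from $u$ to vertices $u_1,\dots,u_\ell$. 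Duplicator answers Spoiler with the bijection $f$ itself (reused as her $A\to B$ map). Spoiler then picks $\vec a\in A^r$; set $\vec b:=f(\vec a)\in B^r$. For each $i\in[\ell]$ form the twisted bijection $f_{P_i}$ obtained by switching $f$ along $P_i$; it is good with $\tw(f_{P_i})=u_i$. The tuple Duplicator hands over is $P:=\{\,\vec b^{\,(1)},\dots,\vec b^{\,(\ell)}\,\}$ where $\vec b^{\,(i)}:=f_{P_i}(\vec a)$. The key computation — the one place where the paths being edge-disjoint and passing through $u=\tw(f)$ matters — is that for each coordinate position, at most one of the $\ell$ bijections $f_{P_i}$ differs from $f$ there, because the edge sets of the $P_i$ are pairwise disjoint and each $P_i$ "absorbs" the twist at $u$; hence $\widehat{n^\ell_B}(\vec b^{\,(1)},\dots,\vec b^{\,(\ell)})=f(\vec a)=\vec b$ (each coordinate of $\vec b$ agrees with at least $\ell-1$ of the $\vec b^{\,(i)}$), so $f(\vec a)=\vec b\in\Gamma^\omega_{p_B}(P)$ and this is a legal answer. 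Spoiler completes the round by choosing one $\vec b^{\,(i)}\in P$; Duplicator updates her bijection to $f_{P_i}$ and passes to CR-position $(F',u_i)$, which is exactly the position reached in $\CR^\ell_k(G)$ when Cop plays $F'$ and then selects $i$ — hence still winning for Robber, and the invariant is restored. One must check $f_{P_i}\in\gb(F')$: since Robber's invariant keeps $u_i=\tw(f_{P_i})$ away from all edges of $F'$, and since $P_i$ avoids $F\cap F'$ so the only edges where $f_{P_i}$ was altered relative to a "good for $F$" bijection lie outside the retained part of $F$, the partial isomorphism property at coordinates still pebbled is unaffected; the details here are bookkeeping about which edges changed.

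Finally, since Robber never loses $\CR^\ell_k(G)$ from $(v_0,\emptyset)$, the invariant is maintained forever, so Spoiler never wins any round, and by rule~(3) Duplicator wins. Combining this with Lemma~\ref{parity-lemma} (which separates $\mAe_\ell(G)$ from $\mAo_\ell(G)$ via $\CSP(\mC_\ell)$), Theorem~\ref{G-existence} (which supplies, for each $k$, a graph $G$ with $v_0$ $k$-safe for $\emptyset$), and Theorem~\ref{fullCSPGchar} (the game characterises $\equiv^k_{\infty\omega,\cN_\ell}$), yields the desired undefinability of $\CSP(\mC_\ell)$ in $L^\omega_{\infty\omega}(\bQ_{\cN_\ell})$ once one lets $k\to\infty$.

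I expect the main obstacle to be the coordinate-wise near-unanimity verification together with the precise tracking of $F'$: one has to argue carefully that the edge-disjoint paths $P_i$, each running through the twist vertex $u$, produce tuples whose componentwise "majority" is exactly $f(\vec a)$ — this is where the $\ell$-regularity, the disjointness condition in Definition~\ref{CRgame}, and the fact that a twist along a $u$-to-$u_i$ path moves $\tw$ from $u$ to $u_i$ all come together — and simultaneously that every such twisted bijection remains good for the updated pebble set $F'$, so that Lemma~\ref{gb-lemma} keeps applying and rule~(1) never triggers.
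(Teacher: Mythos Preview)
Your proposal is correct and mirrors the paper's proof: maintain a good bijection $f$ whose twist tracks Robber's position in $\CR^\ell_k(G)$, use the $\ell$ edge-disjoint escape paths $P_i$ to produce the alternative tuples $f_{P_i}(\vec a)$ satisfying the near-unanimity identity coordinatewise, and restore the invariant via the $F\cap F'$-avoidance of the paths. Two small slips to fix: the initial bijection should simply be the identity (your ``single switch at $v_0$'' example actually moves the twist to the neighbour, since switching one edge $\{v_0,w\}$ gives $\Odd(f)=\{v_0,w\}$ and hence $\tw(f)=w$); and $F'$ must be computed and fed to Robber only \emph{after} Spoiler has chosen his tuple $\vec a$, since $F'$ depends on it---this reordering is harmless because the bijection $f$ you hand Spoiler does not depend on $F'$, and the paper presents the steps in exactly that order.
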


\begin{proof}
We show that Duplicator can maintain the following invariant for all positions $(\alpha,\beta)$
obtained during the play of the game 
$\PG^{\cN_\ell}_k(\mA_\ell^\even(G),\mA_\ell^\odd(G),\emptyset,\emptyset)$:
\begin{description}
\item[($\dagger$)] There exists a bijection $f\in\gb(F_\alpha)$ such that 
$p:=\alpha\mapsto\beta\subseteq f$ and $\tw(f)$  is $k$-safe for $F_\alpha$, 
where $F_\alpha:=\{e\in E\mid \ran(\alpha)\cap \{e\}\times [2]\not=\emptyset\}$.
\end{description}
Note that if ($\dagger$) holds, then $p\subseteq f\restriction (F_\alpha\times [2])$
and, by Lemma~\ref{gb-lemma}, $f\restriction(F_\alpha\times [2]) \in \PI(\mA_\ell^\even(G),\mA_\ell^\odd(G))$, whence Spoiler does not win the game
in position $(\alpha,\beta)$. Thus, maintaining the invariant ($\dagger$) during the play
guarantees a win for Duplicator.

Note first that ($\dagger$) holds in the initial position $(\alpha,\beta)=(\emptyset,\emptyset)$
of the game:
if $f_0\in\gb$ is the bijection with $\tw(f_0)=v_0$, as
$\emptyset\mapsto\emptyset=\emptyset\subseteq f_0$ and $\tw(f_0)$
is $k$-safe for $F_\emptyset=\emptyset$.

Assume then that ($\dagger$) holds for a position $(\alpha,\beta)$, and assume that Spoiler plays a left 
$\cN_\ell$-quantifier move by choosing $r\le k$ and $\vec y\in X^r$. 
Duplicator answers this by giving the bijection $f^{-1}$. Let 
$\vec b=(b_i,\ldots,b_r)\in A_\ell(G)^r$ be the second part of Spoiler's move, and let 
$F'$ be the set $\{e\in E\mid \ran(\beta[\vec b/\vec y])\cap \{e\}\times [2]\not=\emptyset\}$. 
Since $\tw(f)$ is $k$-safe for $F_\alpha$, there
are mutually disjoint $E\setminus (F_\alpha\cap F')$-paths $P_i$, $i\in [\ell]$,
from $\tw(f)$ to some vertices $u_i$ that are $k$-safe for the set $F'$. Let
$f_{P_i}$, $i\in [\ell]$, be the good bijections obtained from $f$ as explained before
Definition~\ref{gbF}. Now Duplicator answers the move $\vec b$ of Spoiler
by giving the set $P=\{\vec a_1,\ldots,\vec a_\ell\}$ of $r$-tuples, where $\vec a_i:= f^{-1}_{P_i}(\vec b)$ for each $i\in [\ell]$. 

To see that this is a legal move,
observe that since the paths $P_i$ are disjoint, for each $j\in [r]$ there is at most
one $i\in [\ell]$ such that $f^{-1}_{P_i}(b_j)\not= f^{-1}(b_j)$. Thus we have
$\hat{q}(\vec a_1,\ldots,\vec a_\ell)=f^{-1}(\vec b)$, and hence $f^{-1}(\vec b)\in q(P)\subseteq\Gamma^\omega_{q}(P)$ for $q=n^\ell_{A_\ell(G)}$, as required. 
Let Spoiler complete
the round of the game by choosing $i\in [\ell]$; thus, the next position is 
$(\alpha',\beta'):=(\alpha[\vec a_i/\vec y],\beta[\vec b/\vec y])$. It suffices now to show 
that ($\dagger$) holds for the position $(\alpha',\beta')$ and the bijection $f':=f_{P_i}$.

Note first that $F_{\alpha'}=F'$, since
clearly $\ran(\alpha[\vec a_i/\vec y])\cap \{e\}\times [2]\not=\emptyset$ if, and only if,
$\ran(\beta[\vec b/\vec y])\cap \{e\}\times [2]\not=\emptyset$. Thus, $\tw(f')=u_i$
is $k$-safe for $F_{\alpha'}$.  This implies that $f'\in\gb(F_{\alpha'})$, since
otherwise by Definition~\ref{CRgame}, Cop would win the game $\CR_k(G)$ immediately
in position $(F_{\alpha'},\tw(f'))$. It remains to show that $p':=\alpha'\mapsto\beta'$
is contained in $f'$. For all components $a^j_i$ of $\vec a_i$ we have 
$p'(a^j_i)=b_j=f'(a^j_i)$ by definition of $\vec a_i$. On the other hand, for any element 
$a\in\dom(p')\setminus\{a^1_i,\ldots,a^r_i\}$ we have $p'(a)=p(a)=f(a)$.
Furthermore, since the path $P_i$ does not contain any edges in 
$F_\alpha\cap F_{\alpha'}$, we have $f'\restriction (F_\alpha\cap F_{\alpha'})\times [2]
=f\restriction (F_\alpha\cap F_{\alpha'})\times [2]$, and since clearly 
$a\in(F_\alpha\cap F_{\alpha'})\times [2]$, we see that
$f'(a)=f(a)$. Thus, 
$p'(a)=f'(a)$.

The case where Spoiler plays a right $\cN_\ell$-quantifier move is similar.
\end{proof}

Note that the vocabulary of the structures $\mA_\ell^\even(G)$ and $\mA_\ell^\odd(G)$ consists of two $\ell$-ary relation symbols. The presence of at least $\ell$-ary relations is actually necessary: Duplicator cannot have a winning strategy in $\PG^{\cN_\ell}_{\ell-1}$ on structures containing only relations of arity less than $\ell$, since by Corollary~\ref{less-than-ell-ary}(b), all properties of such structures are definable in $L^{\ell-1}_{\infty\omega}(\bQ_{\cN_\ell})$.

From Lemma~\ref{parity-lemma}, Theorem~\ref{G-existence} and Theorem~\ref{winning-game}, we immediately obtain the result.

\begin{theorem}
For any $\ell\ge 3$, $\CSP(\mC_\ell)$ is not definable in $L^\omega_{\infty\omega}(\bQ_{\cN_\ell})$. 
\end{theorem}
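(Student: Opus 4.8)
The plan is to obtain the statement by assembling three results already in hand — Lemma~\ref{parity-lemma}, Theorem~\ref{G-existence} and Theorem~\ref{winning-game} — and feeding the outcome into the game characterisation of Theorem~\ref{fullCSPGchar}. All the genuine work has been done by those results; what remains is a short argument by contradiction.

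I would argue as follows. Suppose, for contradiction, that $\CSP(\mC_\ell)$ is defined by a sentence $\varphi$ of $L^\omega_{\infty\omega}(\bQ_{\cN_\ell})$. Since any formula of this logic contains only finitely many variables, $\varphi\in L^k_{\infty\omega}(\bQ_{\cN_\ell})$ for some $k$. First, apply Theorem~\ref{G-existence} to fix a connected $\ell$-regular graph $G$ in which every vertex is $k$-safe for $\emptyset$ (the graph produced in the proof of that theorem is connected); equipping $G$ with any linear order $\le^G$ — the notion of $k$-safety, defined through $\CR^\ell_k(G)$, does not refer to the order — makes $\mA_\ell^\even(G)$ and $\mA_\ell^\odd(G)$ well defined, with $v_0$ the least vertex. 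Since $v_0$ is $k$-safe for $\emptyset$, Theorem~\ref{winning-game} gives Duplicator a winning strategy in $\PG^{\cN_\ell}_k(\mA_\ell^\even(G),\mA_\ell^\odd(G),\emptyset,\emptyset)$. The family $\cN_\ell$ is strongly invariant, hence invariant, so Theorem~\ref{fullCSPGchar} applies and yields $\mA_\ell^\even(G)\equiv^k_{\infty\omega,\cN_\ell}\mA_\ell^\odd(G)$; in particular the two structures agree on $\varphi$. But by Lemma~\ref{parity-lemma} we have $\mA_\ell^\even(G)\in\CSP(\mC_\ell)$ and $\mA_\ell^\odd(G)\notin\CSP(\mC_\ell)$, so exactly one of them satisfies $\varphi$, a contradiction. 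Hence $\CSP(\mC_\ell)$ is not definable in $L^\omega_{\infty\omega}(\bQ_{\cN_\ell})$.

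Since this deduction is mechanical, there is no real obstacle at this stage; the only points needing attention are the remarks that a single sentence of $L^\omega_{\infty\omega}(\bQ_{\cN_\ell})$ lies in some finite-variable fragment and that $\cN_\ell$ meets the invariance hypothesis of Theorem~\ref{fullCSPGchar}. The mathematical weight of the whole development sits in Theorem~\ref{G-existence} (the girth-versus-branching estimate that produces $\ell$ pairwise disjoint escape routes avoiding Cop's pebbles) and in Theorem~\ref{winning-game} (translating those escape routes into a legal $\cN_\ell$-quantifier answer $P=\{\vec a_1,\ldots,\vec a_\ell\}$ for Duplicator with $\widehat{n^\ell}(\vec a_1,\ldots,\vec a_\ell)$ matching Spoiler's move), not in this final step. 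It is also worth noting that, by Corollary~\ref{less-than-ell-ary}(b), $C^\omega_{\infty\omega}=L^\omega_{\infty\omega}(\bQ_1)$ is contained in $L^\omega_{\infty\omega}(\bQ_{\cN_\ell})$ for every $\ell\ge3$, so the theorem strengthens the known undefinability of $\CSP(\mC_\ell)$ in counting logic, and in particular rules out definability in the extension of $C^\omega_{\infty\omega}$ by all hypergraph-colourability quantifiers identified in Example~\ref{P-closed-CSP}.
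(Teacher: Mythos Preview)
Your proposal is correct and follows exactly the paper's approach: the paper states that the theorem is obtained immediately from Lemma~\ref{parity-lemma}, Theorem~\ref{G-existence} and Theorem~\ref{winning-game}, and you have simply spelled out (correctly) the implicit use of Theorem~\ref{fullCSPGchar} and the passage to a finite-variable fragment that make this deduction go through.
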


Note that $\CSP(\mC_\ell)$ corresponds to solving systems of linear equations over $\Z/2\Z$ with all equations containing (at most) $\ell$ variables. Thus, as a corollary we see that solvability of such systems of equations cannot be expressed in $L^\omega_{\infty\omega}(\bQ_{\cN_\ell})$ for any $\ell$. Furthermore, since systems of linear equations over $\Z/2\Z$ can be solved in polynomial time, we see that the complexity class $\PTIME$ is not contained in $L^\omega_{\infty\omega}(\bQ_{\cN_\ell})$ for any $\ell$.

Finally, note that since the class $\CSP(\mC_\ell)$ is downwards monotone, by Lemma~\ref{r-N-ell-closed} the quantifier $Q_{\CSP(\mC_\ell)}$ is $\cN_{\ell+1}$-closed. Thus, we get the following hierarchy result for the near-unanimity families $\cN_\ell$ with respect to the arity $\ell$ of the partial functions.

\begin{theorem}
For every $\ell\ge 3$ there is a quantifier in $\bQ_{\cN_{\ell+1}}$ which is not definable in $L^\omega_{\infty\omega}(\bQ_{\cN_\ell})$. 
\end{theorem}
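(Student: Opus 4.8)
The plan is to use the concrete CSP quantifier $Q_{\CSP(\mC_\ell)}$ as the witness, so that the statement follows by assembling two facts established earlier: membership in $\bQ_{\cN_{\ell+1}}$, and non-definability in $L^\omega_{\infty\omega}(\bQ_{\cN_\ell})$.

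\textbf{Step 1: $Q_{\CSP(\mC_\ell)}\in\bQ_{\cN_{\ell+1}}$.} First I would observe that $\mC_\ell$ carries exactly two $\ell$-ary relations $R_0,R_1$, so $Q_{\CSP(\mC_\ell)}$ is a quantifier of arity $\ell$, and in particular of arity $<\ell+1$. Next I would check that $Q_{\CSP(\mC_\ell)}$ is downwards monotone: if $\mA\le\mB$ and $h\colon\mB\to\mC_\ell$ is a homomorphism, then since $R^\mA\subseteq R^\mB$ for every relation symbol $R$, the very same map $h$ is a homomorphism $\mA\to\mC_\ell$, so $\mA\in\CSP(\mC_\ell)$. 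With these two observations in hand, Lemma~\ref{r-N-ell-closed}, applied with the near-unanimity family $\cN_{\ell+1}$ (note $\ell+1\ge 3$) and the arity bound $\ell<\ell+1$, yields $Q_{\CSP(\mC_\ell)}\in\bQ_{\cN_{\ell+1}}$.

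\textbf{Step 2: $Q_{\CSP(\mC_\ell)}$ is not definable in $L^\omega_{\infty\omega}(\bQ_{\cN_\ell})$.} This is exactly the statement of the preceding theorem, namely that $\CSP(\mC_\ell)$ is not definable in $L^\omega_{\infty\omega}(\bQ_{\cN_\ell})$, which was itself obtained by combining Lemma~\ref{parity-lemma} (the CFI structures $\mA_\ell^\even(G)$ and $\mA_\ell^\odd(G)$ are separated by $\CSP(\mC_\ell)$), Theorem~\ref{G-existence} (for every $k$ there is an $\ell$-regular graph $G$ all of whose vertices are $k$-safe for $\emptyset$), and Theorem~\ref{winning-game} (Robber's winning strategy in $\CR^\ell_k(G)$ lifts to a Duplicator winning strategy in $\PG^{\cN_\ell}_k(\mA_\ell^\even(G),\mA_\ell^\odd(G),\emptyset,\emptyset)$), together with the characterisation of $\equiv^k_{\infty\omega,\cN_\ell}$ by the game (Theorem~\ref{fullCSPGchar}). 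Combining Steps 1 and 2 gives a quantifier in $\bQ_{\cN_{\ell+1}}$ that is not definable in $L^\omega_{\infty\omega}(\bQ_{\cN_\ell})$, which is the claim.

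\textbf{Main obstacle.} There is no substantial new difficulty: all the heavy lifting — the CFI construction, the Cops\&Robber game, the existence of high-girth regular graphs, and the transfer of winning strategies — has already been carried out in the earlier parts of Section~\ref{sec:CFI}. The only points needing a moment's attention are confirming that the arity of $Q_{\CSP(\mC_\ell)}$ is genuinely at most $\ell$ (so that the hypothesis $r<\ell+1$ of Lemma~\ref{r-N-ell-closed} is satisfied) and that downwards monotonicity holds exactly for $\CSP$ classes; both are immediate. Thus this theorem is essentially a bookkeeping consequence recording the arity hierarchy $\bQ_{\cN_\ell}\subsetneq\bQ_{\cN_{\ell+1}}$ (at the level of definability in the infinitary finite-variable logic).
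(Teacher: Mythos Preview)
Your proposal is correct and follows essentially the same approach as the paper: the witness is $Q_{\CSP(\mC_\ell)}$, its membership in $\bQ_{\cN_{\ell+1}}$ comes from downwards monotonicity plus the arity bound $\ell<\ell+1$ via Lemma~\ref{r-N-ell-closed}, and its non-definability in $L^\omega_{\infty\omega}(\bQ_{\cN_\ell})$ is the preceding theorem. Your write-up is in fact slightly more explicit than the paper's one-sentence justification.
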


\section{Conclusion}\label{Conc}

We have introduced new methods, in the form of pebble games, for
proving inexpressibility in logics extended with generalized
quantifiers.  There is special interest in proving inexpressibility in
logics with quantifiers of unbounded arity.  We introduced a general
method of defining such collections inspired by the equational
theories of polymorphisms arising in the study of constraint
satisfaction problems.  Perhaps surprisingly, while the collection of
CSP that have near-unanimity polymorphisms is rather limited (as they
all have bounded width), the collection of quantifiers with the
corresponding closure property is much richer, including even CSP that are intractable.
The pebble game gives a general method of proving inexpressibility that works for a wide variety of closure conditions.  We were able to deploy it to prove that solvability of systems of equations over $\Z/2\Z$ is not definable using only quantifiers closed under near-unanimity conditions.

It would be interesting to use the pebble games we have defined to show undefinability with other collections of quantifiers closed under partial polymorphisms.  Showing some class is not definable with quantifiers closed under partial Maltsev polymorphisms would be especially instructive.  It would require using the pebble games with a construction that looks radically different from the CFI-like constructions most often used.  This is because CFI constructions encode problems of solvability of equations over finite fields (or more generally finite rings), and all of these problems are Maltsev-closed.

\bibliography{../ref.bib}

\end{document}